\newcommand{\mat}[1]{\mathbf{#1}}
\newtheorem{theorem}{Theorem}
\begin{document}

\preprint{APS/123-QED}

\title{Schr\"{o}dinger PCA: On the Duality between Principal Component Analysis and Schr\"{o}dinger Equation}

\author{Ziming Liu}
\email[]{Ziming Liu: zmliu@mit.edu}
\affiliation{Department of Physics, Massachusetts Institute of Technology, Cambridge, USA}

\author{Sitian Qian}
\affiliation{School of Physics, Peking University, Beijing, China}

\author{Yixuan Wang}
\email[]{Yixuan Wang: roywang@caltech.edu}
\affiliation{Applied and Computational Mathematics, California Institute of Technology, Pasadena, USA}

\author{Yuxuan Yan}
\affiliation{School of Physics, Peking University, Beijing, China}

\author{Tianyi Yang}
\affiliation{School of Physics, Peking University, Beijing, China}

\date{\today}

\begin{abstract}
Principal component analysis (PCA) has been applied to analyze random fields in various scientific disciplines. However, the explainability of PCA remains elusive unless strong domain-specific knowledge is available. This paper provides a theoretical framework that builds a duality between the PCA eigenmodes of a random field and eigenstates of a Schr\"{o}dinger equation. Based on the duality we propose  the \textit{Schr\"{o}dinger PCA} algorithm to replace the expensive PCA solver with a more sample-efficient Schr\"{o}dinger equation solver. We verify the validity of the theory and the effectiveness of the algorithm with numerical experiments.

\end{abstract}

\maketitle


\section{Introduction}\label{sec:intro}

Random fields are prevalent and important in many scientific disciplines, e.g. cosmology~\cite{pen1997generating}, high energy physics~\cite{novak2014determining}, fluid dynamics~\cite{pereira_garban_chevillard_2016}, and material science~\cite{pavliotis2014stochastic,garbuno2020interacting}. Since fields values are usually correlated with neighboring points, a low-dimensional description of random fields is not only possible, but also more computationally efficient. One standard method for dimensionality reduction~\cite{STRICHARTZ198951,tensor_decomp_ml,tensor_kolda,H_Tu_2014,wold1987principal,hyvarinen2000independent,dict_learning,nmf} is principal component analysis (PCA), which attempts to diagonalize the correlation matrix via an orthogonal transformation, whose column vectors are known as eigenmodes.

A random field can be characterized by a two-point covariance function $C(\mat{x},\mat{y})$, so the question is: How is $C(\mat{x},\mat{y})$ related to the eigenmodes $\phi_i(\mat{x})$? The answer is the Karhunen–Loève expansion~\cite{schwab2006karhunen}, an established theory in the statistics community. 
The aim of this paper is to answer this question in the taste of physics by revealing the fact that PCA eigenmodes of random field configurations are (approximately) equivalent to eigenstates of a steady-state Schr\"{o}dinger equation. The aspects of such duality is summarized in TABLE \ref{tab:SE-PCA}.

\begin{table}[htbp]
	\caption{The duality between the PCA problem and the quantum steady-state problem}
	\begin{tabular}{|c|c|}
		\hline
		PCA & Schr\"{o}dinger \\\hline
		Variance $A(\mat{x})$ & Potential $V(\mat{x})$  \\\hline
		Correlation Kernel $\mat{\Sigma}(\mat{x})$ &  Inverse Mass $\mat{\Sigma}_m(\mat{x})$ \\\hline 
		Eigenvalue $\lambda_i$ & Eigen-energy $E_i$ \\\hline
		Eigenvector $\phi_i$ & Eigenstate $\psi_i$ \\\hline
	\end{tabular}
	\label{tab:SE-PCA}
\end{table}

For example, we study a toy 2D random field shown in FIG. ~\ref{fig:GRF_illus}. Many realizations of the random field are drawn from the probabilistic distribution determined by $C(\mat{x},\mat{y})$. PCA eigenmodes of these realizations have interesting structures: the eigenmodes look like quantum wave functions of a 2D harmonic oscillator. In this paper, we argue that this is not a mere coincidence but rather a direct consequence of the duality between the PCA problem and the quantum steady-state problem. Definitions of these quantities in the table will be made precise in the next section.

Besides offering a phyiscal perspective of the well-known PCA algorithm, we can also achieve a significant computational speedup benefiting from the duality.
We can solve a PCA problem with a Schr\"{o}dinger solver (i.e. elliptic equation solver), which is the main idea of our proposed Schr\"{o}dinger PCA algorithm. Numerical experiments show that Schr\"{o}dinger PCA requires $100\times$ fewer anchor points to accurately recover eigenvalues and eigenvectors compared to PCA. In the literature, there have been various methods proposed to accelerate the PCA algorithm, including Hilbert space methods exploiting the structure of RKHS ~\cite{amini2012sampled} and stochastic PDE methods that use solutions of SPDE to approximate Gaussian rathom fields~\cite{lindgren2011explicit}. Our Schr\"{o}dinger PCA method instead studies the case where the variance function could be as general as position-dependent, and the correlation length is relatively small, rendering previous methods inapplicable or inefficient. We notice that ~\cite{Akinduko_2014,spatialpca} are the PCA variants closest to our goals, but they do not point out the duality nor take advantage of the elliptic partial differential equation solvers. Although connections between convolutions and differential operators have been well studied in the mathematics literature~\cite{Bellman1964DifferentialAA, DISTEFANO19701021, DRAGANOV2010952}, we provide a particular case where an intuitive physics picture is available.

\begin{figure}[t]
	\centering
	\includegraphics[width=1.0\linewidth]{./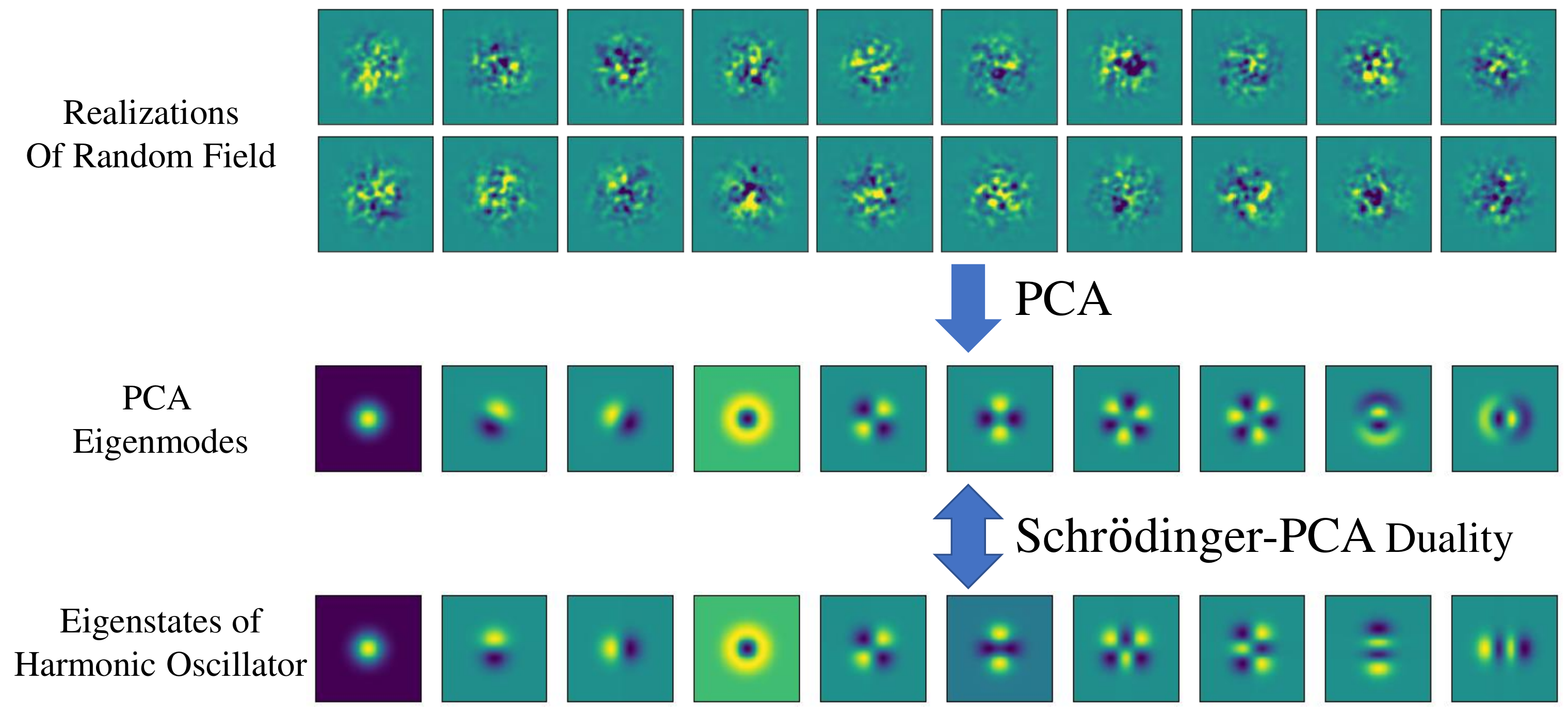}
	\caption{PCA eigenmodes of random field configurations share much similarity with eigenstates of a 2D quantum harmonic oscillator.}
	\label{fig:GRF_illus}
	\vskip -0.5cm
\end{figure}

\section{Method}\label{sec:background}

\subsection{Notations and Backgrounds}

{\bf Random Field}
A random field is a stochastic process where  $\phi:\mathbb{R}^d\to\mathbb{R}$ is a stochastic function, which we refer as \textit{field} below. Two-point covariance $C(\mat{x},\mat{y})$, one-point variance $A(\mat{x})$, and two-point correlation $K(\mat{x},\mat{y})$ of the random field are defined as:
\begin{equation}\label{eq:grf}
\begin{aligned}
	&C(\mat{x},\mat{y})\equiv \langle \phi(\mat{x})\phi(\mat{y})\rangle\\
    &A(\mat{x})\equiv C(\mat{x},\mat{x})= \langle \phi(\mat{x})\phi(\mat{x})\rangle\\
    &K(\mat{x},\mat{y})\equiv C(\mat{x},\mat{y})/\sqrt{A(\mat{x})A(\mat{y})}
\end{aligned}
\end{equation}
where $\langle\cdots\rangle$ means averaging over an ensemble of realizations of the random field or integral over the probability distribution:

\begin{equation}
\begin{aligned}
P(\phi(\mat{x}_1)=v_1,\cdots,&\phi(\mat{x}_n)=v_n)\\
&\propto {\rm exp}(-\sum_{ij}C(\mat{x}_i,\mat{x}_j)v_iv_j)
\end{aligned}
\end{equation}

Usually  $K(\mat{x},\mat{y})=f(|\mat{x}-\mat{y}|)$ where $|\cdot|$ can be an arbitrary norm, $f$ is a monotonically decreasing function and approaches to 0 as $|\mat{x}-\mat{y}|\to\infty$ due to locality. To make things concrete, we focus on Gaussian random fields~\footnote{Our analysis can also extend to laplace-type correlation and possibly many other tailed correlations (please see Appendix B).}, where

\begin{equation}
\begin{aligned}
K(\mat{x},\mat{y})={\rm exp}(-\frac{1}{2}(\mat{y}-\mat{x})^T\mat{\Sigma}^{-1}(\mat{y}-\mat{x}))\\
\end{aligned}
\end{equation}

Without loss of generality, we have assumed $\langle \phi(x)\rangle=0$~\footnote{Principal component analysis will remove the mean value before solving the eigen-problem.}. In the special case where $\mat{\Sigma}=\sigma^2\mat{I}$, $\sigma$ characterizes the correlation length of the Gaussian random field. The correlation length can originate from the smoothness of functions or from physical entities, e.g. the width of a  particle~\cite{miller2007glauber}. A large class of spatial data (including but not limited to physics) can be modeled with Gaussian random fields~\cite{neal1997monte,pen1997generating,novak2014determining,gp_robot,gramacy2008bayesian}, in that Gaussian random fields capture two key features: fluctuations and (local) correlations. In numerical experiments, Gaussian random fields are efficiently generated by smearing out completely uncorrelated random fields with a Gaussian filter (see Appendix D).

{\bf PCA} PCA plays an important role in dimension reduction, pattern recognition, and partial differential equations (PDE)~\cite{zhang2016big,hoffmann2007kernel,perlibakas2004distance,ding2004k,owhadi2015bayesian}. The key idea of PCA is to diagonalize the correlation matrix via an orthogonal transformation. For the random fields described by Eq.~(\ref{eq:grf}), eigenmodes $\phi_i(\mat{x}) (i=1,2,\cdots)$ should satisfy the eigen-equations and orthogonality constraints:
\begin{equation}\label{eq:eigen_pca}
\begin{aligned}
    &\int_{\mat{y}} d\mat{y}C(\mat{x},\mat{y})\phi_i(\mat{y})=\lambda_i\phi_i(\mat{x}), 
    &\int_{\mat{x}} d\mat{x} \phi_i(\mat{x})\phi_j(\mat{x})=\delta_{ij}
\end{aligned}
\end{equation}
where eigenvalues $\lambda_i$ are ordered as $\lambda_1\geq\lambda_2\geq\cdots\geq 0$. The non-negative eigenvalues and existence of orthogonal eigenmodes are due to the positive definiteness of $C(\mat{x},\mat{y})$. Eigen-decomposition of random field is also known is 
Karhunen–Loève expansion in the statistics literature~\cite{schwab2006karhunen}.

{\bf Schr\"{o}inger Equation} The Schr\"{o}dinger equation (SE)~\cite{griffiths2018introduction} uses a wave function $\psi(\mat{x})\ (\mat{x}\in\mathbb{R}^d)$ to characterize a quantum particle. The steady-state SE describes a particle in a potential field $V(\mat{x})$ with constant energy $E$:
\begin{equation}\label{eq:se}
\begin{aligned}
    \hat{H}\psi(\mat{x})=-\frac{\mat{\Sigma}_m(\mat{x})}{2}\Delta \psi(\mat{x})+V(\mat{x})\psi(\mat{x})=E\psi(\mat{x})
\end{aligned}
\end{equation}
where the positive definite matrix $\mat{\Sigma}_m\in\mathbb{R}^{d\times d}$ is the inverse mass matrix of the particle, $\Delta$ is the Laplacian operator: $\Delta=\nabla^2=\sum_{i=1}^d \frac{\partial^2}{\partial x_i^2}$. The first term $-\frac{\mat{\Sigma}_m(\mat{x})}{2}\Delta\psi(\mat{x})$ and the second term $V(\mat{x})\psi(\mat{x})$ correspond to the kinetic and potential energy of a particle, respectively. Because $\hat{H}$ is an Hermitian operator, we have a complete set of real eigenstates $\psi_i(\mat{x})$ that satisfy:
\begin{equation}\label{eq:eigen_se}
\begin{aligned}
    \hat{H}\psi_i(\mat{x})=E_i\psi_i(\mat{x}),\quad \int_{\mat{x}}d\mat{x}\psi_i(\mat{x})\psi_j(\mat{x})=\delta_{ij}
\end{aligned}
\end{equation}
where $\psi_i(\mat{x})$ and $E_i$ are the $i$-th \textit{eigenstate} and \textit{eigen-energy}, respectively.

\subsection{The duality between PCA and Schr\"{o}dinger equation}

We shall build the correspondence between PCA and Schr\"{o}dinger equation under some mild assumptions. Comparing Eq.~(\ref{eq:eigen_pca}) and (\ref{eq:eigen_se}), similar structures can be observed, and intuitively the eigenmodes with the largest eigenvalues of the PCA problem could be mapped one-by-one to the first few lowest energy states of the Schr\"{o}dinger equation. 
We aim to establish the duality in Theorem \ref{thm:1}.

\begin{theorem}\label{thm:1}
(Informal) The PCA problem of Eq.~(\ref{eq:eigen_pca}) can be approximated by the Schr\"{o}dinger problem of Eq.~(\ref{eq:eigen_se}), i.e. two systems have the same equation hence same eigenvalues and eigenmodes up to a (possible) minus sign~\footnote{The eigenvalues of PCA and SE match each other by a minus sign. In terms of eigenmodes, there is a possible minus sign as in most eigen-problems: $\psi_i$ is an eigenstate if and only if $-\psi_i$ is an eigenstate.}, up to a second-order approximation provided that these quantities are equal:
\begin{eqnarray}\label{eq:equivalence}
\begin{aligned}
    \phi(\mat{x})&\Longleftrightarrow\psi(\mat{x})\\ -A(\mat{x})&\Longleftrightarrow V(\mat{x})\\ A(\mat{x})\mat{\Sigma}(\mat{x})&\Longleftrightarrow\mat{\Sigma}_m(\mat{x}).
\end{aligned}
\end{eqnarray}
This approximation is valid for the first few eigenmodes, but not for higher eigenmodes.
\end{theorem}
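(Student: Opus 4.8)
The plan is to turn the PCA integral eigen-equation \eqref{eq:eigen_pca} into a local (differential) eigen-equation by exploiting that the Gaussian correlation kernel $K$ concentrates on the scale of the (small) correlation length, and then to match the resulting operator with the Schr\"odinger operator of \eqref{eq:se}--\eqref{eq:eigen_se} coefficient by coefficient in order to read off the dictionary \eqref{eq:equivalence}.

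First I would write $C(\mat{x},\mat{y})=\sqrt{A(\mat{x})A(\mat{y})}\,K(\mat{x},\mat{y})$, insert it into \eqref{eq:eigen_pca}, and change variables $\mat{u}=\mat{y}-\mat{x}$, so that the eigen-equation reads
\begin{equation*}
\sqrt{A(\mat{x})}\int d\mat{u}\; e^{-\frac{1}{2}\mat{u}^{T}\mat{\Sigma}^{-1}\mat{u}}\,\sqrt{A(\mat{x}+\mat{u})}\,\phi_i(\mat{x}+\mat{u})=\lambda_i\,\phi_i(\mat{x}).
\end{equation*}
Since the kernel is sharply peaked at $\mat{u}=\mat{0}$ over a width set by the correlation length $\ell$, I would Taylor-expand the smooth factor $\sqrt{A(\mat{x}+\mat{u})}\,\phi_i(\mat{x}+\mat{u})$ in $\mat{u}$ and integrate term by term using the Gaussian moments $\int e^{-\frac{1}{2}\mat{u}^{T}\mat{\Sigma}^{-1}\mat{u}}\,d\mat{u}=Z$, $\int u_a\,e^{-\frac{1}{2}\mat{u}^{T}\mat{\Sigma}^{-1}\mat{u}}\,d\mat{u}=0$, and $\int u_a u_b\,e^{-\frac{1}{2}\mat{u}^{T}\mat{\Sigma}^{-1}\mat{u}}\,d\mat{u}=Z\,\Sigma_{ab}$, with $Z=(2\pi)^{d/2}(\det\mat{\Sigma})^{1/2}$. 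Keeping terms through second order in $\mat{u}$ and dropping, within the (already quadratic) kinetic term, the contributions involving derivatives of $A$, the eigen-equation collapses to
\begin{equation*}
A(\mat{x})\,\phi_i(\mat{x})+\frac{1}{2}\,A(\mat{x})\sum_{a,b}\Sigma_{ab}(\mat{x})\,\partial_a\partial_b\phi_i(\mat{x})=\frac{\lambda_i}{Z}\,\phi_i(\mat{x}).
\end{equation*}

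Multiplying by $-1$ and comparing with \eqref{eq:eigen_se} coefficient by coefficient yields the identifications $\psi_i=\phi_i$, $V(\mat{x})=-A(\mat{x})$, $\mat{\Sigma}_m(\mat{x})=A(\mat{x})\mat{\Sigma}(\mat{x})$, and $E_i=-\lambda_i/Z$; in particular the two spectra coincide up to an overall sign (and the positive normalization $Z$), which is exactly the asserted duality, and conversely imposing \eqref{eq:equivalence} makes the two equations agree to this order, which is the form stated in the theorem.

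The last step is error control, which I expect to be the main obstacle. The $n$-th Taylor term is of relative size $(\ell/L_i)^{n}$, where $L_i$ is the scale on which the $i$-th eigenmode varies; for the first few eigenmodes $L_i\gg\ell$ so the truncation at second order is legitimate, whereas for large $i$ the eigenmode oscillates on a scale comparable to $\ell$ and the expansion diverges — this is the "valid only for the first few eigenmodes" caveat, and making it quantitative requires an a priori regularity (or band-limitedness) estimate on the low eigenmodes. The more delicate point is the position dependence of $A$ and $\mat{\Sigma}(\mat{x})$: a fully careful expansion of the quadratic term produces additional $\nabla A$- and $\nabla\mat{\Sigma}$-type contributions that are formally of the same order as the retained $\frac{1}{2}A\mat{\Sigma}\Delta\phi$ term (indeed comparable to it already for the ground state), so the informal statement silently absorbs them into the "$A$ slowly varying" approximation, whereas a rigorous version would have to keep them — most naturally by writing the kinetic operator in divergence form $\frac{1}{2}\sum_{a,b}\partial_a\!\big(A\,\Sigma_{ab}\,\partial_b\,\cdot\big)$, which mildly corrects the operator while preserving self-adjointness and leaving the leading dictionary \eqref{eq:equivalence} intact.
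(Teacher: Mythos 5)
Your core calculation is in the same spirit as the paper's: Taylor-expand $\sqrt{A(\mat{x}+\mat{u})}\,\phi(\mat{x}+\mat{u})$ in $\mat{u}$ about the peak of the Gaussian kernel, integrate against the kernel with Gaussian moments, and match coefficients against the Schr\"{o}dinger operator. Two things are genuinely different, and one of them is a gap.

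First, the paper does not compare the operators directly; it works at the level of Rayleigh quotients, and then invokes the Courant--Fischer min-max characterization
\[
\lambda_{i}=\min_{S:\dim S=i}\ \max_{\phi\in S,\ \|\phi\|=1}\ R(\phi),
\]
so that a bound on $|R_{\mathrm{pca}}(\phi)-R_{\mathrm{se}}(\phi)|$ uniform over admissible $\phi$ immediately yields $|\lambda_i-E_i|\le O(c_1+c_2)$. Your operator-level comparison does produce the dictionary \eqref{eq:equivalence}, but to turn ``the two local operators agree to second order'' into ``the two spectra agree'' you would still need some machinery (operator perturbation theory, quadratic-form bounds, etc.). The paper's Rayleigh-quotient route is not just cleaner, it is the step that actually delivers the eigenvalue estimate stated in the formal Theorem~1.

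Second, and more importantly, you drop the $\nabla A\cdot\nabla\phi$ cross term from the quadratic-order contribution, whereas the paper \emph{keeps} it. This is not a matter of slow variation: that term is the same order as $A\,\nabla\nabla^{T}\phi$, and retaining it is precisely what makes the proof work. In the paper, the quadratic-in-$\mat{r}$ part of the expansion assembles into the perfect divergence
\[
\tfrac{1}{2}\,\mat{r}^{T}\big(\nabla A(\mat{x})\nabla^{T}\phi(\mat{x})+A(\mat{x})\nabla\nabla^{T}\phi(\mat{x})\big)\mat{r}
=\tfrac{1}{2}\,\mat{r}^{T}\big(\nabla^{T}\!\cdot(A(\mat{x})\nabla\phi(\mat{x}))\big)\mat{r},
\]
and after the Gaussian integral and one integration by parts the PCA Rayleigh quotient collapses to the manifestly symmetric form
$-\int d\mat{x}\,A(\mat{x})\,\nabla^{T}\phi(\mat{x})\,\mat{\Sigma}\,\nabla\phi(\mat{x})$, which then matches the SE quadratic form term by term. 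Your closing paragraph essentially rediscovers this (``most naturally by writing the kinetic operator in divergence form''), but you misattribute the paper's handling as ``silently absorbing'' the $\nabla A$ terms. It does not: it retains them, and they are load-bearing. As written, your intermediate collapsed equation $A\phi_i+\tfrac{1}{2}A\,\Sigma_{ab}\partial_a\partial_b\phi_i=(\lambda_i/Z)\phi_i$ defines a non-self-adjoint operator when $A$ depends on $\mat{x}$, which cannot be identified with a Schr\"{o}dinger Hamiltonian; the cross term you dropped is exactly the correction needed to restore self-adjointness. Reinstate it, pass to the divergence form, and then pass to the Rayleigh quotient, and you recover the paper's proof.
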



We summarize the key idea of the proof here. Formal formulation of Theorem \ref{thm:1} and proof details can be found in Appendix A. We characterize a general operator $\mat{M}$ by its Rayleigh quotient ($\Omega$ is the integral domain):
\begin{equation}\label{eq:ray}
\begin{aligned}
    R(\mat{M},\phi)=\frac{\int_{\mat{x}\in\Omega}{d\mat{x}\phi(\mat{x})\mat{M}\phi(\mat{x})}}{\int_{\mat{x}\in\Omega}d\mat{x}\phi^2(\mat{x})}= \int_{\mat{x}\in\Omega}d\mat{x}\phi(\mat{x})\mat{M}\phi(\mat{x})\\(||\phi||^2=\int_{\mat{x}\in\Omega}\phi^2(\mat{x})d\mat{x}=1)
\end{aligned}
\end{equation}
The rayleigh quotient of PCA and SE are defined as $R_{pca}(C,\phi)$ and $R_{se}(\hat{H},\psi)$ respectively:

\begin{equation}\label{eq:ray_pca}
\begin{aligned}
    R_{pca}(C,\phi)&=\int_{\mat{x}\in\Omega,\mat{y}\in\Omega} d\mat{x}d\mat{y} \phi(\mat{x})C(\mat{x},\mat{y})\phi(\mat{y})
    \\
    &=\int_{\mat{x}\in\Omega}d\mat{x}\phi(\mat{x})\{\int_{\mat{y}\in\Omega}d\mat{y}C(\mat{x},\mat{y})\phi(\mat{y})\}\\
\end{aligned}
\end{equation}

\begin{equation}\label{eq:ray_se}
\begin{aligned}
R_{se}(\hat{H},\psi)&=\int_{\mat{x}\in\Omega} d\mat{x} \psi(\mat{x})\hat{H}\psi(\mat{x})\\
&=\int_{\mat{x}\in\Omega}d\mat{x} \psi(\mat{x})\{(-\frac{\mat{\Sigma}_m}{2}\Delta+V(\mat{x}))\psi(\mat{x})\}
\end{aligned}
\end{equation}

The only thing left is to prove the (approximate) equivalence of $\{\cdots\}$ in Eq.~(\ref{eq:ray_pca}) and (\ref{eq:ray_se}) by invoking Guassian integrals and integration by parts (details in Appendix A). To sum up, the approximation is valid for lower eigenmodes in practice, but we don't know a priori how many eigenmodes can be well approximated. Rather, the formal version of Theorem 1 in  Appendix A sheds some light on how to post-process and verify a posteriori that the approximation is valid.

Although highly speculative, this duality between PCA and the quantum steady-state problem might shed light on deep theoretical links between quantum mechanics (Schr\"{o}dinger) and information compression (PCA), which will be explored in future works.

\subsection{Discussions}

{\bf Implications of the Duality}
The duality also explains scientists' conventions when they analyze a signal/field by decomposing it into a combination of hand-designed modes, e.g. Fourier bases, spherical harmonics, Bessel functions, etc. We argue that scientists are implicitly doing some optimal compression (PCA) when choosing the hand-designed modes. For simplicity, we consider a uniform random field such that $\mat{\Sigma}(\mat{x})$ and $A(\mat{x})$ are independent of $\mat{x}$. These conditions are then translated to uniform potential energy $V(\mat{x})=V_0$ (WLOG $V(\mat{x})=0$) and constant mass in the Schr\"{o}dinger equation. Consequently, the Schr\"{o}dinger equation describes a free particle on the domain $\Omega$, where both the random field and Schr\"{o}dinger equation are defined on.

\begin{table}[htbp]
	\caption{`Optimal' bases for different problems}
	\begin{tabular}{|c|c|c|}
		\hline
		Domain $\Omega$ & Eigenstates/Bases & Physics Example \\\hline
		Sphere (2D) & Spherical Harmonics & CMB in Cosmology \\\hline
		Circular Well (2D) & Bessel Function & Circular Waveguide \\\hline
		Periodic Box ($n$D) & Fourier Bases & Fluid Dynamics \\\hline
	\end{tabular}
	\label{tab:examples}
\end{table}

Suppose $\Omega$ is a sphere (2D). On the one hand, eigenstates of a free particle on a sphere are spherical harmonics, i.e. eigenstates of the angular momentum operator. On the other hand in cosmology, spherical harmonics are leveraged to analyze the (approximately) isotropic cosmic microwave background (CMB)~\cite{durrer2020cosmic}. We speculate this is not coincident but a direct consequence of the duality. If we humans are living in a universe where CMB is largely anisotropic, the spherical harmonics are no longer `optimal' in the sense of PCA (see the anisotropic example in Section \ref{sec:exp_climate}). Similarly, a particle trapped in a deep circular well (2D) has Bessel functions as eigenstates, a free particle in a periodic box has plane wave solutions (i.e. Fourier). These basis functions are used in waveguide design and fluid dynamics respectively. We summarize the relations in TABLE \ref{tab:examples}. 

{\bf No Quantum Tunneling for PCA} Quantum mechanics is distinguished from classical mechanics by its uncertainty principle, leading to the intriguing phenomenon of quantum tunneling. When a particle (with state-independent mass) is trapped in a finite deep potential well i.e. $V(\mat{x})=-V_0<0 (\mat{x}\in\Omega)$ and $V(\mat{x})=0 (\mat{x}\not\in\Omega)$, the wave function of the particle is non-zero outside $\Omega$ even when the particle has negative energy $-V_0<E<0$. However, there does not exist a random field model dual to the quantum system. According to Theorem \ref{thm:1}, one-point variance is determined $A(\mat{x})=V_0>0 (\mat{x}\in\Omega)$ and $A(\mat{x})=0 (\mat{x}\not\in\Omega)$. For $\mat{x}\not\in\Omega$, inverse mass matrix $\mat{\Sigma}_m(\mat{x})=A(\mat{x})\mat{\Sigma}(\mat{x})=0$ i.e. an infinite mass. The consequence is : Outside $\Omega$ the particle is infinitely heavy thus cannot move at all, i.e. quantum tunneling is impossible. This actually makes perfect sense for PCA: if a field value never fluctuates, it should contribute exactly zero to all eigenmodes.

{\bf Extension to Laplace-type random field}  The duality can be similarly extended to the Laplace-type kernel, with only coefficients different from the Gaussian case. Please refer to Appendix B and C for details.

\section{Schr\"{o}dinger PCA Algorithm}

We aim to design an algorithm, dubbed as Schr\"{o}digner PCA, to convert a PCA problem to the corresponding Schr\"{o}dinger problem. The main idea is already clear in the aforementioned duality. We start this section by revealing the limitation of PCA, motivating Schr\"{o}dinger PCA.

\subsection{Motivation and Algorithm}

{\bf Why PCA Degrades} In practice, only a finite set of \textit{anchor points} $S=\{\mat{x}_1,\cdots,\mat{x}_n\}$ are available for field measurements i.e. only $(\phi(\mat{x}_1),\cdots,\phi(\mat{x}_n))$ are accessible rather than $\phi(\mat{x})$ for any $\mat{x}\in\Omega\subset\mathbb{R}^d$. The procedure of PCA on discrete anchor points is shown in Alg.~{\ref{alg:pca}}. In brief, one applies PCA to many realizations of the field values at anchor points, obtain eigenmodes defined on these anchor points and interpolate eigenvectors to continuous functions.

In the case $\mat{\Sigma}=\sigma^2\mat{I}_{d\times d}$ where $\sigma$ is the correlation length, two neighboring anchor points should be close enough (e.g. distance $\lesssim\sigma$) to make sure the covariance is significant enough to reveal local correlation. As a rough estimate, we suppose $\Omega$ has size $L$ on each dimension, then the number of anchor points required scales as $(\frac{L}{\sigma})^{d}$ which increases as $\frac{L}{\sigma}$ increases (multi-scale) or $d$ increases (curse of dimensionality). We will attack the multi-scale issue with Schr\"{o}dinger PCA algorithm. 

{\bf Schr\"{o}dinger PCA} Instead of trying to detect local correlation with densely distributed anchor points, Schr\"{o}dinger PCA bypasses this multi-scale issue by intergrating out local Gaussian correlation analytically thus transforming the PCA problem to a Schr\"{o}dinger equation. The key steps are summarized here (also in Alg.~\ref{alg:sepca}): (1) compute only \textit{variance} of fields at each anchor point; (2) leveraging dualities in Eq.~(\ref{eq:equivalence}) and interpolate to obtain the continuous potential function $V(\mat{x})$ and inverse mass matrix $\mat{\Sigma}_m(\mat{x})$; (3) Solving the Schr\"{o}dinger equation Eq.~(\ref{eq:se}) with elliptic equation solvers.

\subsection{Discussions}

{\bf When Schr\"{o}dinger PCA Outperforms PCA} It is worth emphasizing the key factor that contributes to the success of  Schr\"{o}dinger PCA lies in scale separation, i.e. $\phi(\mat{x})$ have smaller scales than $A(\mat{x})$ such that $\nabla\nabla^T A(\mat{x})/A(\mat{x})\ll\nabla\nabla^T\phi(\mat{x})/\phi(\mat{x})$. While PCA requires anchor points to characterize $\phi(\mat{x})$ (small scale), Schr\"{o}dinger PCA only requires anchor points to characterize $A(\mat{x})$ (large scale). For this scale separation to work, Schr\"{o}dinger PCA requires extra estimation of the small-scale correlation kernel $K(\mat{x},\mat{y})$ in order to determine the coefficients in the Schr\"{o}dinger equation, but this is cheap because either (1) the correlation kernel is known a prior (e.g. can be derived from physical principles); (2) a few \textit{detector points} can be added locally around each anchor point to estimate the correlation kernel. In the isotropic case $\mat{\Sigma}=\sigma^2\mat{I}$, only one detector point is required to determine $\sigma$ so the extra computational costs can be ignored. For simplicity, we choose $\mat{\Sigma}=\sigma^2\mat{I}$ and assume $\sigma$ is known a priori (but can depend on $\mat{x}$) in the following numerical experiments.

{\bf Error analysis of the second-order approximation of Schr\"{o}dinger PCA}: Since the duality between PCA and Schr\"{o}dinger equation is only approximated up to second-order approximations, higher-order terms should be small enough for this duality to be valid. The  only two approximations used in proof of Theorem \ref{thm:1} (Appendix \ref{app:gaussian_theory}) are: (1) The derivative of $A(\mat{x})$ is truncated to the first order $\nabla A(\mat{x})$, thus higher-order terms $\nabla^n A(\mat{x}) (n\geq 2)$ can be neglected; (2)  The derivative of $\phi(\mat{x})$ is truncated to the second order $\nabla\nabla \phi(\mat{x})$, thus higher-order terms $\nabla^n \phi(\mat{x}) (n\geq 3)$ can be neglected. In Section \ref{sec:exp} we show in practice, the errors can be almost neglected for leading eigenmodes/eigenvalues. We leave the full analysis of how perturbation of high-order terms influence eigenmodes/eigenvalues in future works.

\begin{algorithm}[t]
\SetAlgoLined
\KwResult{Data points $\mat{X}=\{\mat{x}_i\in\mathbb{R}^d\}$ and corresponding field value $\mat{\Phi}=\{\phi(\mat{x}_i,t)\in\mathbb{R}\}$ $(i=1,\cdots,m; t=1,\cdots,n)$ where $i$ labels different anchor points in space, and $t$ refers to different time of measurements or realizations. We have removed the mean value of fields such that $\frac{1}{n}\sum_{t=1}^n \phi(\mat{x}_i,t)=0$ for all $i$.}
(1) Compute \textit{covariance} matrix $\mat{C}_{ij}={\rm Cov}(\phi(\mat{x}_i),\phi(\mat{x}_j))=\frac{1}{n}\sum_{t=1}^n \phi(\mat{x}_i,t)\phi(\mat{x}_j,t)$\;
(2) Diagonalization of $\mat{C}$ such that $\mat{C}=\sum_k \lambda_k\phi_k\phi_k^T$ where $\lambda_k$ and $\phi_k$ are the $k$-th eigenvalue and eigenmode\;
(3) Interpolation of finite-dimensional eigenmode $\phi_k\in\mathbb{R}^m (k=1,2,\cdots)$ to a function $\phi_k(\mat{x})$ that satisfies $\phi_k(\mat{x}_l)=\phi_{k,l}$ where $\phi_{k,l}$ is the $l$-th entry of $\phi_k$ which corresponds to the field value at $\mat{x}_l$ in mode $\phi_k$\;
\KwResults{$\{\lambda_k,\phi_k(\mat{x})\}(k=1,2,\cdots)$}
\caption{Principal Component Analysis for Mode Decomposition}
\label{alg:pca}
\end{algorithm}

\begin{algorithm}[t]
\SetAlgoLined
\KwResult{Data points $\mat{X}=\{\mat{x}_i\in\mathbb{R}^d\}$, field value $\mat{\Phi}=\{\phi(\mat{x}_i,t)\in\mathbb{R}\}$ and estimated correlation kernel $\mat{\Sigma}=\{\mat{\Sigma}(\mat{x}_i)\in\mathbb{R}^{d\times d}\}$ $(i=1,\cdots,m; t=1,\cdots,N)$}
(1) Compute \textit{variance} vector: $A(\mat{x}_i)=\frac{1}{N}\sum_{t=1}^N|\phi(\mat{x}_i,t)|^2$\;
(2) Define corresponding potential vector as $V_i=-A(\mat{x}_i)$ and interpolation of $V\in\mathbb{R}^m$ to a function $V(\mat{x})$ satisfying $V(\mat{x}_i)=V_i$\;
(3) Define corresponding inverse mass matrix as $\mat{\Sigma}_{m,i}=A(\mat{x}_i)\mat{\Sigma}(\mat{x}_i)$ and interpolation of $\mat{\Sigma}_m\in\mathbb{R}^{d\times d}$ to a matrix function $\mat{\Sigma}_m(\mat{x})$ satisfying $\mat{\Sigma}_m(\mat{x}_i)=\mat{\Sigma}_{m,i}$\;
(4) Solve the schr\"{o}dinger equation $-\frac{\mat{\Sigma}_m(\mat{x})}{2}\Delta \psi(\mat{x})+V(\mat{x})\psi(\mat{x})=E\psi(\mat{x})$ and obtain eigenstates $\psi_k(\mat{x})$ and corresponding energies $E_k$\;
\KwResults{$\{E_k,\psi_k(\mat{x})\}(k=1,2,\cdots)$}
\caption{Schr\"{o}dinger Principal Component Analysis for Mode Decomposition}
\label{alg:sepca}
\end{algorithm}

\section{Numerical Experiments}\label{sec:exp}
Firstly, we revisit the example mentioned in FIG. \ref{fig:GRF_illus} to test the correctness of the correspondence between PCA and Schr\"{o}dinger equation stated in Theorem \ref{thm:1}. We find PCA and Schr\"{o}dinger PCA have similar eigenvalues and eigenmodes for the fine grid (i.e. a large number of anchor points), but PCA fails for the coarse grid (i.e. a small number of anchor points) while Schr\"{o}dinger PCA can still obtain accurate eigenvalues and eigenmodes. Secondly, we apply our method to the global climate example which is modeled as a random field on a 2D sphere $S^2$. The climate modes discovered by Schr\"{o}dinger PCA admit a nice physical interpretation and demonstrate the potential of our method to attack problems on  graphs and manifolds beyond the Euclidean space.

\subsection{Two-dimensional Gaussian Process}

The two-dimensional Gaussian random field lives on the grid [-50,50]$\times$[-50,50] (size $L\times L\equiv100\times 100$). The field generation process can be found in Appendix D. The generated profiles are denoted as $\phi_t(\mat{x})$ where $\mat{x}=(x_1,x_2)$ and $t=1,\cdots,N$ index to distinguish among different realizations. The random field satisfies these statistical properties:
\begin{align}\label{eq:exp_realization}
& A(\mat{x}) =
\left\{
    \begin{array}{cr}
    1-(x_1^2+x_2^2)/40^2 & (x_1^2+x_2^2\leq 40^2)  \\
    0 & (x_1^2+x_2^2> 40^2)\\
    \end{array}
\right.\\
& C(\mat{x},\mat{y}) = \sqrt{A(\mat{x})A(\mat{y})}{\rm exp}(-\frac{(\mat{x}-\mat{y})^T(\mat{x}-\mat{y})}{2\sigma^2})\quad  (\sigma=3)
\end{align}

Note that $\sigma=3\ll L=100$, this corresponds to a multi-scale scenario where Schr\"{o}dinger PCA obtains more accurate results than PCA in the undersampling regime.

\subsubsection{Oversampling regime with fine grids}
In this part, we use numerical results to verify the equivalence between PCA and Schr\"{o}dinger equation, as pointed out in Theorem \ref{thm:1}. PCA (fine grid) is treated as the gold standard method and its results are treated as ground truths.

{\bf PCA (fine grid)}: We use all $101\times 101=10201$ points as anchor points and generate 40000 realizations of Gaussian random field satisfying Eq.~(\ref{eq:exp_realization}). We apply PCA to obtain the first $k=21$ eigenmodes and corresponding eigenvalues. The distance of two neighboring anchor points is 1 (smaller than $\sigma=3$). In this sense, the anchor points oversample the random field, so covariance information among different spatial points can be captured. We implement PCA by using {\tt sklearn.decomposition.pca}.

{\bf SE (fine grid)}: We use all 10201 anchor points to evaluate $A(\mat{x})$ to obtain the potential function $V(\mat{x})$ and inverse mass matrix $\mat{\Sigma}_m(\mat{x})$ as indicated in Theorem \ref{thm:1}, then the Schr\"{o}dinger equation Eq.~(\ref{eq:se}) is discretized on the same grid with the finite difference method. The eigenvalue problem is solved by the python eigen-solver named {\tt numpy.linalg.eigh}. We collect first $k=21$ eigenstates and eigenenergies. These eigenenergies are negative and are negated below to compare with PCA results.

{\bf SE (harmonic approximation)}: The potential function $V(\mat{x})=-A(\mat{x})$ has a quadratic form that is reminiscent of the harmonic oscillator in quantum physics. Although $\mat{\Sigma}_m(\mat{x})$ depends explicitly on $\mat{x}$ it can be nicely approximated as $\mat{\Sigma}_m(\mat{0})$ for ground states and low excited states (eigenstates with low energies) because the wavefunction $\psi(\mat{x})$ centered around $\mat{x}=0$ where the potential energy is global minima. As soon as the system is identified as a harmonic oscillator, the eigenenergies and eigenstates are immediately accessible via our knowledge of quantum mechanics, detailed in Appendix E.

As shown in FIG. ~\ref{fig:eigenvalues}, three methods listed above agree well in terms of eigenvalues for first $k=10$ modes. For $k>10$ modes, the staircase degeneracy structure still remains the same for all methods above, while the values have deviations $\sim 0.06$, which is because the approximation $\mat{\Sigma}_m(\mat{x})\approx\mat{\Sigma}_m(\mat{0})$ becomes poorer for states with higher energies. The staircase degeneracy structure can be elegantly understood by the energy spectrum of 2-d quantum harmonic oscillator: an energy level with quantum number $n\geq 0$ admits $n+1$ ways to choose two integers $n_1,n_2\geq0$ such that $n_1+n_2=n$. Likewise, eigenmodes for all methods are illustrated and compared in FIG. ~\ref{fig:modes} and they show similar behavior for the three aforementioned methods.

As mentioned in the introduction, the Stochastic PDE (SPDE) method and the  Reproducing Kernel Hilbert Space (RKHS) method can also be used to solve the eigen-decomposition of random fields. We show below that neither methods are proper to attack our particular case with position-dependent variance and small correlation length.

{\bf SPDE}: The Matern-type random field has eigenmodes which are solutions of the following equality~\cite{lindgren2011explicit}:

\begin{equation}\label{eq:spde}
    (\kappa^2-\Delta)^{\alpha/2}\phi_i(x)=\lambda_i\phi_i(x)
\end{equation}

Although Eq.~(\ref{eq:spde}) looks similar to Eq.~(\ref{eq:se}), our toy random field is not the Matern-type random field, so Eq.~(\ref{eq:spde}) is inapplicable to our example. The Matern-type variance function should be at least position-independent, but our variance function is indeed position-dependent. Consequently we do not expect SPDE to work, but for illustration we choose $\alpha=2$ and $\kappa=1/\sigma=1/3$. Eq.~(\ref{eq:spde}) therefore becomes a wave equation whose eigenmodes are 2D Fourier bases with wave-number $(k_1,k_2)=\frac{2\pi}{L}(n_1,n_2)$. And according to the wave-number spectrum obtained in~\cite{lindgren2011explicit}, eigenvalues are 
\begin{equation}
    \lambda_i = C(\kappa^2+k_{i,1}^2+k_{i,2}^2)^{-\alpha}
\end{equation}
Eigenvalues and eigenmodes obtained from SPDE are plotted in FIG.~\ref{fig:eigenvalues} and FIG.~\ref{fig:modes}.

{\bf Reproducing Kernel Hilbert Space (RKHS)}: The space of bandlimited continuous functions (truncated Fourier space) is a RKHS~\cite{amini2012sampled}. We first project random fields onto the leading 100 Fourier bases (i.e. truncated Fourier transformation), obtaining Fourier coefficients, followed by applying PCA to those Fourier coefficients. Eigenmodes of Fourier coefficients are then transformed back via an inverse Fourier transformation. The RKHS method is insufficient to reproduce eigenvalues and eigenmodes in FIG.~\ref{fig:eigenvalues} and FIG.~\ref{fig:modes} because our toy problem has small correlation length, and the leading 100 Fourier bases cannot capture such small-scale information.

\subsubsection{Undersampling regime with coarse grids}
When the number of anchor points decreases, PCA will fail in terms of both eigenvalues and eigenmodes; by contrast, Schr\"{o}dinger PCA remains quite robust and behaves nearly the same as the Schr\"{o}dinger PCA in the oversampling regime, as illustrated in FIG. ~\ref{fig:eigenvalues} and \ref{fig:modes}.

{\bf PCA (coarse grid 1)}: We set anchor points on the uniform coarse grid $[-50,40,\cdots,50]^2$ (121 anchor points in total). The same 40000 random field realizations are used as in the oversampling test, but only the field values on the coarse grid are available. We apply PCA to obtain eigenmodes and eigenvalues defined on the coarse grid, and interpolate the eigenmodes back to the fine grid.

{\bf PCA (coarse grid 2)}: Same with {\bf PCA (coarse grid 1)} except that switching the order of PCA and interpolation.

{\bf SE (coarse grid)}: We evaluated $A(\mat{x})$ at anchor points (coarse grid) and interpolate $A(\mat{x})$ back to the fine grid. Correspondingly the values of $V(\mat{x})$ and $\mat{\Sigma}_m(\mat{x})$ can be determined on the fine grid based on Eq.~(\ref{eq:equivalence}). Finally the SE is discretized with the finite difference method and solved with the {\tt eigh} eigen-solver on the fine grid.

\begin{figure}[htbp]
    \centering
    \includegraphics[width=0.8\linewidth]{./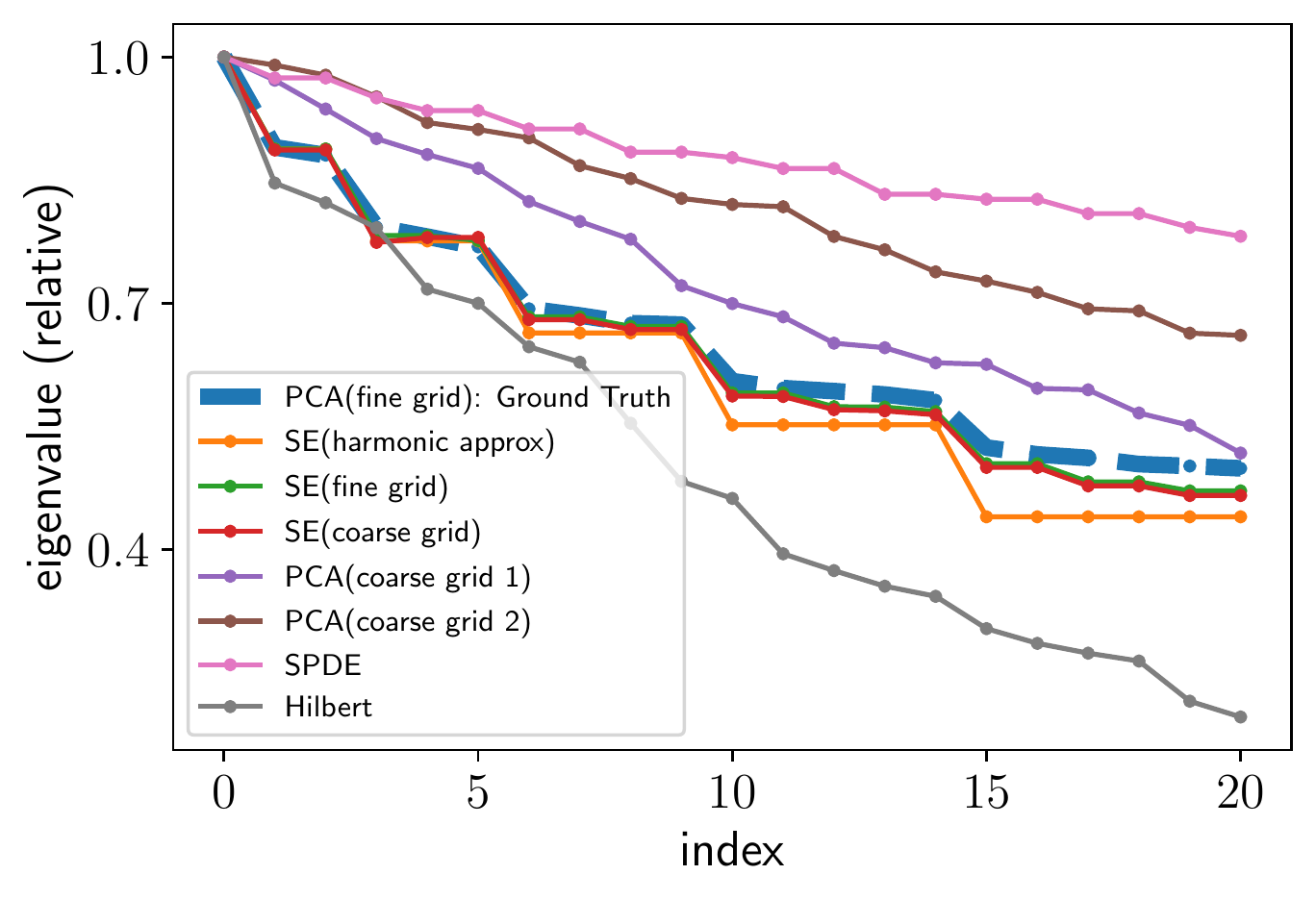}
    \caption{The (relative) eigenvalues of first 21 modes with different algorithms. Schr\"{o}dinger PCA is able to recover eigenvalues on both the fine grid and the corase grid, while other methods either fail (SPDE and Hilbert), or succeed only for the fine grid (PCA).}
    \label{fig:eigenvalues}
\end{figure}

\begin{figure}[t]
    \centering
    \includegraphics[width=1.0\linewidth]{./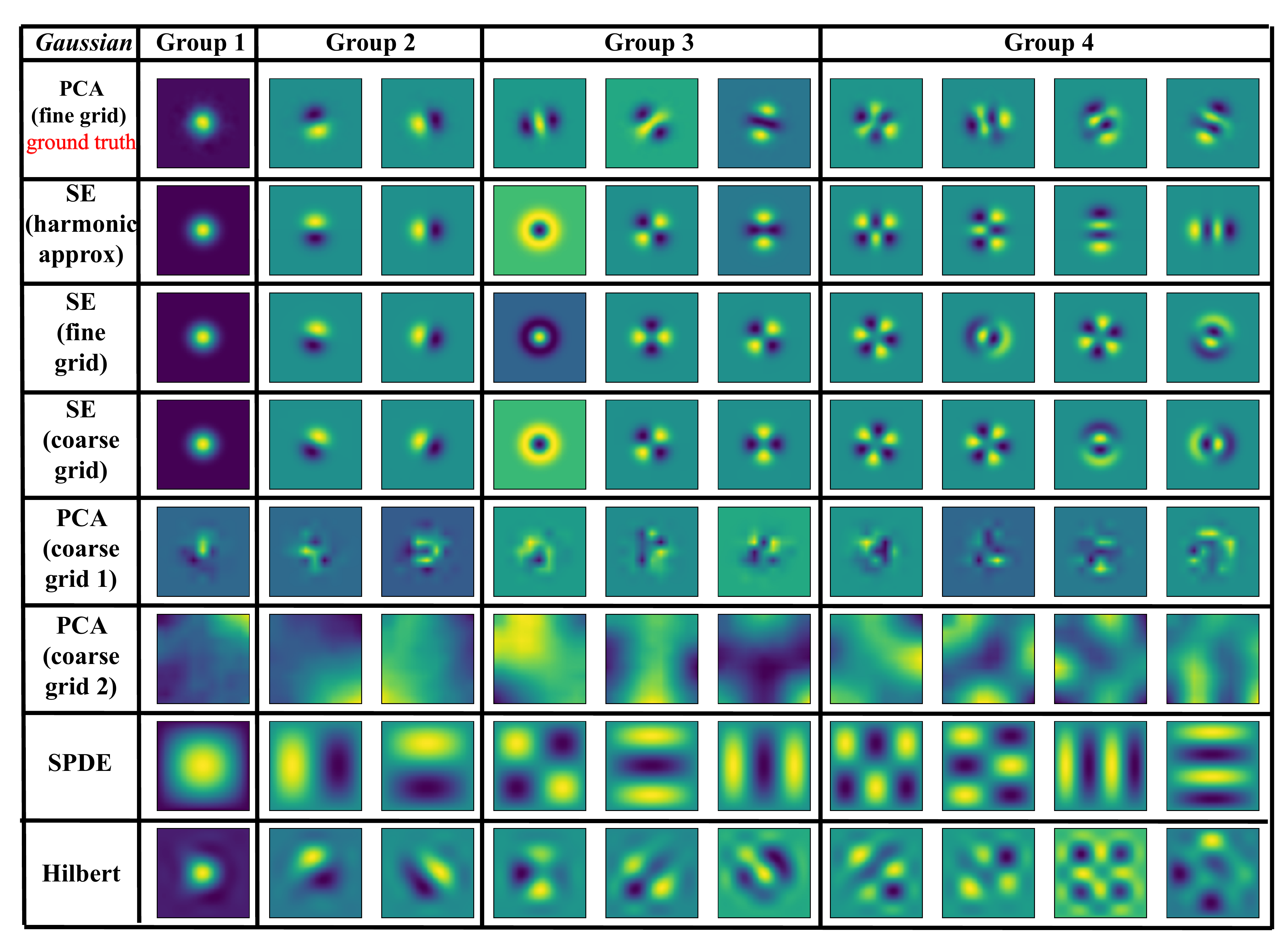}
    \caption{The eigenmodes of first 10 modes from different algorithms. Each square subfigure is a 2D heat diagram where bright (yellow) and dark (dark blue) colors represent high and low values respectively.  Schr\"{o}dinger PCA is able to recover eigenmodes on both the fine grid and the corase grid, while other methods either fail (SPDE and Hilbert), or succeed only for the fine grid (PCA).}
    \label{fig:modes}
\end{figure}

\subsection{Global Climate Modes}\label{sec:exp_climate}
In this example, we demonstrate the effectiveness of the proposed Schr\"{o}dinger PCA to solve a mode decomposition problem on a unit sphere $S^2$. 

Our earth has various kinds of climates, depending on longitude and latitude of the location. In the following, we use spherical coordinate $\mat{x}=(\theta,\varphi)$ and $\phi(\mat{x})$ refers to any climate-related scalar field (e.g. temperature). Both fluctuation magnitude $A(\mat{x})$ and (local) correlation length $\sigma(\mat{x})=\sqrt{\mat{\Sigma}(\mat{x})}$ depends explicitly on $\mat{x}$: (1) $A(\mat{x})$ is small for a place where it is like spring all the year around, while large for another place with scorching summers and cold winters; (2) $\sigma(\mat{x})$ is large in oceans due to ocean circulations, while small on a land without any winds.

We consider two ``earths": isotropic earth (below $a=0$) and anisotropic earth ($a=3$):

\begin{align}\label{eq:climate_realization}
\left\{
    \begin{array}{cr}
    A(\theta,\varphi) = \frac{1}{20}(3 + a{\rm cos}\theta)  \\
    \sigma^2(\theta,\varphi) = \frac{1}{50}(3 - a{\rm cos}\theta)d^2 \\
    \end{array}
\right.
\end{align}
To discretize the Laplacian oeprator on a graph, we utilze icosahedron mesh which contains 2562 vertices, 7680 edges and 5120 faces. Here $d$ is the (averaged) length of all edges. The details of generating the mesh can be found in Appendix F. We repalce the Laplacian operator in Eq.~(\ref{eq:se}) with the graph Laplacian matrix and obtain the eigenmodes on the graph. Finally we interpolate eigenmodes back to the sphere and show them (top view) in FIG. ~\ref{fig:climate}. Here $k=0,1,2\cdots$ refers to the index of eigenstates, ordered from lowest energy to highest energy. The isotropic case provides a baseline for spherical function decomposition, and in fact they correspond to spherical harmonic functions. We observe that eigenmodes of the anisotropic earth can shed light on the the fluctuation pattern: (1) Fluctuations are large around the poles, so the first few patterns ($k=1,5,10,20$) only concentrate around poles; (2) intermediate patterns ($k=100,200$) are particularly interesting because such patterns have similar magnitude around the pole and the equator, but finer structure is observed around the pole revealing that the correlation length is smaller around the pole than the equator; (3) the last few patterns $(k=1000,2000)$ capture local fluctuations around the equator, not revealing collective behavior of global temperature~\footnote{Although we do not provide PCA results here, the averaged edge length $d$ is much greater than the (averaged) correlation length $\sigma$, which in principle prevents PCA from working.}. These observations are particularly interesting because it allows one to infer the random field model from eigenmode structures. We will investigate quantitatively the question of `Can one hear the shape of random fields?'~\footnote{Similar to the spirit of `Can one hear the shape of a drum?' ~\cite{10.2307/2313748}.} in future works.

\begin{figure}[htbp]
    \centering
    \includegraphics[width=1.0\linewidth]{./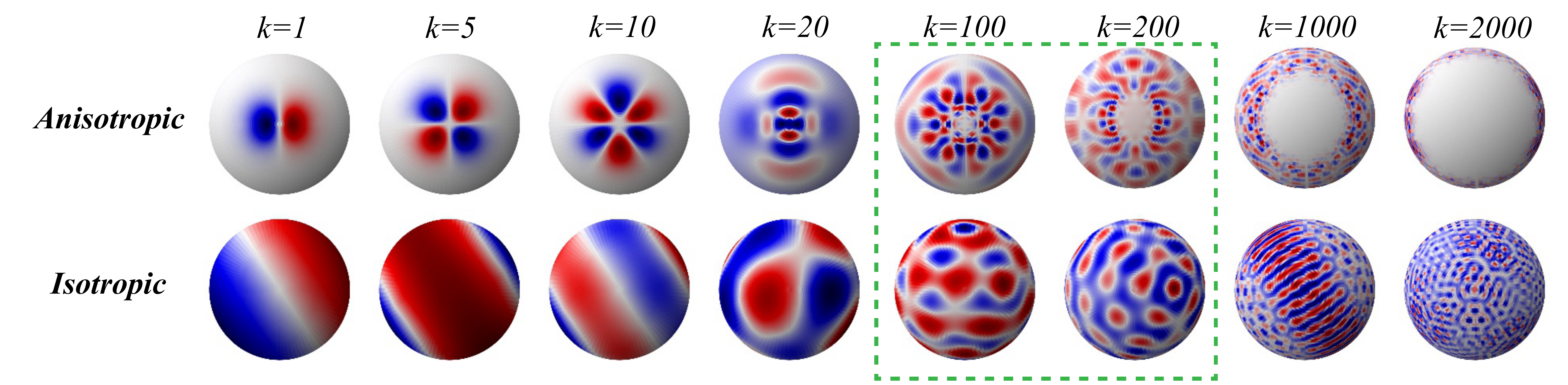}
    \caption{Eigenmodes of the global climate model obtained from Schr\"{o}dinger PCA (top view). First row: Anisotropic globe; Second row: Isotropic globe. Each globe is a heat diagram where bright (red) and dark (blue) colors represent high and low values respectively.}
    \label{fig:climate}
\end{figure}

\section{Conclusions and discussions}\label{sec:conclusion}

In this paper, we first build the duality between PCA eigenmodes of a random field and eigenstates of a Schr\"{o}dinger equation. Based on this observation, we propose the Schr\"{o}dinger PCA algorithm which is more sample-efficient than PCA when the random fields have multi-scale structures. Besides the numerical examples in the previous section, the proposed algorithm has the potential to speed up many scientific computations: (1) Model reduction of turbulent fluid dynamics~\cite{pereira_garban_chevillard_2016}; (2) Analysis of the cosmic microwave background~\cite{pen1997generating}; (3) Collective behavior of relativistic heavy-ion collisions~\cite{novak2014determining}. All of these scenarios have a very narrow correlation kernel, which can be nicely handled by our algorithm, while vanilla PCA will fail on these cases in the undersampling regime.

In the future, it would be interesting to investigate deeper physical implications of the duality between PCA and the Schr\"{o}dinger equation. Moreover, we would like to extend Schr\"{o}dinger PCA algorithm to solve higher-dimensional problems, or problems on graphs and Riemannian manifolds, with the help of state-of-the-art elliptic partial differential equation solvers. 

\section*{Acknowledgement}
We would like to thank Zhihan Li, Yifan Chen, Huichao Song, Houman Owhadi and Max Tegmark for valuable discussions.

\bibliographystyle{unsrt}
\bibliography{ref}

\begin{thebibliography}{10}

\bibitem{pen1997generating}
Ue-Li Pen.
\newblock Generating cosmological gaussian random fields.
\newblock {\em The Astrophysical Journal Letters}, 490(2):L127, 1997.

\bibitem{novak2014determining}
John Novak, Kevin Novak, Scott Pratt, Joshua Vredevoogd, CE~Coleman-Smith, and
  RL~Wolpert.
\newblock Determining fundamental properties of matter created in
  ultrarelativistic heavy-ion collisions.
\newblock {\em Physical Review C}, 89(3):034917, 2014.

\bibitem{pereira_garban_chevillard_2016}
Rodrigo~M. Pereira, Christophe Garban, and Laurent Chevillard.
\newblock A dissipative random velocity field for fully developed fluid
  turbulence.
\newblock {\em Journal of Fluid Mechanics}, 794:369–408, 2016.

\bibitem{pavliotis2014stochastic}
Grigorios~A Pavliotis.
\newblock {\em Stochastic processes and applications: diffusion processes, the
  Fokker-Planck and Langevin equations}, volume~60.
\newblock Springer, 2014.

\bibitem{garbuno2020interacting}
Alfredo Garbuno-Inigo, Franca Hoffmann, Wuchen Li, and Andrew~M Stuart.
\newblock Interacting langevin diffusions: Gradient structure and ensemble
  kalman sampler.
\newblock {\em SIAM Journal on Applied Dynamical Systems}, 19(1):412--441,
  2020.

\bibitem{STRICHARTZ198951}
Robert~S. Strichartz.
\newblock Harmonic analysis as spectral theory of laplacians.
\newblock {\em Journal of Functional Analysis}, 87(1):51 -- 148, 1989.

\bibitem{tensor_decomp_ml}
N.~D. {Sidiropoulos}, L.~{De Lathauwer}, X.~{Fu}, K.~{Huang}, E.~E.
  {Papalexakis}, and C.~{Faloutsos}.
\newblock Tensor decomposition for signal processing and machine learning.
\newblock {\em IEEE Transactions on Signal Processing}, 65(13):3551--3582,
  2017.

\bibitem{tensor_kolda}
Tamara~G. Kolda and Brett~W. Bader.
\newblock Tensor decompositions and applications.
\newblock {\em SIAM Review}, 51(3):455--500, 2009.

\bibitem{H_Tu_2014}
Jonathan H.~Tu, Clarence W.~Rowley, Dirk M.~Luchtenburg, Steven L.~Brunton, and
  J.~Nathan~Kutz.
\newblock On dynamic mode decomposition: Theory and applications.
\newblock {\em Journal of Computational Dynamics}, 1(2):391–421, 2014.

\bibitem{wold1987principal}
Svante Wold, Kim Esbensen, and Paul Geladi.
\newblock Principal component analysis.
\newblock {\em Chemometrics and intelligent laboratory systems}, 2(1-3):37--52,
  1987.

\bibitem{hyvarinen2000independent}
Aapo Hyv{\"a}rinen and Erkki Oja.
\newblock Independent component analysis: algorithms and applications.
\newblock {\em Neural networks}, 13(4-5):411--430, 2000.

\bibitem{dict_learning}
I.~{Tošić} and P.~{Frossard}.
\newblock Dictionary learning.
\newblock {\em IEEE Signal Processing Magazine}, 28(2):27--38, 2011.

\bibitem{nmf}
Daniel~D. Lee and H.~Sebastian Seung.
\newblock Algorithms for non-negative matrix factorization.
\newblock In T.~K. Leen, T.~G. Dietterich, and V.~Tresp, editors, {\em Advances
  in Neural Information Processing Systems 13}, pages 556--562. MIT Press,
  2001.

\bibitem{schwab2006karhunen}
Christoph Schwab and Radu~Alexandru Todor.
\newblock Karhunen--lo{\`e}ve approximation of random fields by generalized
  fast multipole methods.
\newblock {\em Journal of Computational Physics}, 217(1):100--122, 2006.

\bibitem{amini2012sampled}
Arash~A Amini, Martin~J Wainwright, et~al.
\newblock Sampled forms of functional pca in reproducing kernel hilbert spaces.
\newblock {\em The Annals of Statistics}, 40(5):2483--2510, 2012.

\bibitem{lindgren2011explicit}
Finn Lindgren, H{\aa}vard Rue, and Johan Lindstr{\"o}m.
\newblock An explicit link between gaussian fields and gaussian markov random
  fields: the stochastic partial differential equation approach.
\newblock {\em Journal of the Royal Statistical Society: Series B (Statistical
  Methodology)}, 73(4):423--498, 2011.

\bibitem{Akinduko_2014}
A~A Akinduko and A~N Gorban.
\newblock Multiscale principal component analysis.
\newblock {\em Journal of Physics: Conference Series}, 490:012081, mar 2014.

\bibitem{spatialpca}
Urška Demšar, Paul Harris, Chris Brunsdon, A.~Stewart Fotheringham, and Sean
  McLoone.
\newblock Principal component analysis on spatial data: An overview.
\newblock {\em Annals of the Association of American Geographers},
  103(1):106--128, 2013.

\bibitem{Bellman1964DifferentialAA}
R.~Bellman, R.~Kalaba, and B.~Kotkin.
\newblock Differential approximation applied to the solution of convolution
  equations.
\newblock {\em Mathematics of Computation}, 18:487--491, 1964.

\bibitem{DISTEFANO19701021}
Néstor Distéfano.
\newblock On alternative representations of time varying viscoelastic
  materials.
\newblock {\em International Journal of Solids and Structures},
  6(7):1021--1033, 1970.

\bibitem{DRAGANOV2010952}
Borislav~R. Draganov.
\newblock Exact estimates of the rate of approximation of convolution
  operators.
\newblock {\em Journal of Approximation Theory}, 162(5):952--979, 2010.

\bibitem{miller2007glauber}
Michael~L Miller, Klaus Reygers, Stephen~J Sanders, and Peter Steinberg.
\newblock Glauber modeling in high-energy nuclear collisions.
\newblock {\em Annu. Rev. Nucl. Part. Sci.}, 57:205--243, 2007.

\bibitem{neal1997monte}
Radford~M Neal.
\newblock Monte carlo implementation of gaussian process models for bayesian
  regression and classification.
\newblock {\em arXiv preprint physics/9701026}, 1997.

\bibitem{gp_robot}
B.~{Bócsi}, D.~{Nguyen-Tuong}, L.~{Csató}, B.~{Schölkopf}, and J.~{Peters}.
\newblock Learning inverse kinematics with structured prediction.
\newblock In {\em 2011 IEEE/RSJ International Conference on Intelligent Robots
  and Systems}, pages 698--703, 2011.

\bibitem{gramacy2008bayesian}
Robert~B Gramacy and Herbert K~H Lee.
\newblock Bayesian treed gaussian process models with an application to
  computer modeling.
\newblock {\em Journal of the American Statistical Association},
  103(483):1119--1130, 2008.

\bibitem{zhang2016big}
Tonglin Zhang and Baijian Yang.
\newblock Big data dimension reduction using pca.
\newblock In {\em 2016 IEEE International Conference on Smart Cloud
  (SmartCloud)}, pages 152--157. IEEE, 2016.

\bibitem{hoffmann2007kernel}
Heiko Hoffmann.
\newblock Kernel pca for novelty detection.
\newblock {\em Pattern recognition}, 40(3):863--874, 2007.

\bibitem{perlibakas2004distance}
Vytautas Perlibakas.
\newblock Distance measures for pca-based face recognition.
\newblock {\em Pattern recognition letters}, 25(6):711--724, 2004.

\bibitem{ding2004k}
Chris Ding and Xiaofeng He.
\newblock K-means clustering via principal component analysis.
\newblock In {\em Proceedings of the twenty-first international conference on
  Machine learning}, page~29, 2004.

\bibitem{owhadi2015bayesian}
Houman Owhadi.
\newblock Bayesian numerical homogenization.
\newblock {\em Multiscale Modeling \& Simulation}, 13(3):812--828, 2015.

\bibitem{griffiths2018introduction}
David~J Griffiths and Darrell~F Schroeter.
\newblock {\em Introduction to quantum mechanics}.
\newblock Cambridge University Press, 2018.

\bibitem{durrer2020cosmic}
Ruth Durrer.
\newblock {\em The cosmic microwave background}.
\newblock Cambridge University Press, 2020.

\bibitem{10.2307/2313748}
Mark Kac.
\newblock Can one hear the shape of a drum?
\newblock {\em The American Mathematical Monthly}, 73(4):1--23, 1966.

\end{thebibliography}

\clearpage

\onecolumngrid

\appendix
\section{Proof of Theorem 1}\label{app:gaussian_theory}
In this appendix, we shall present a formal version of Theorem 1, with rigorous mathematical formulations along with assumptions on the kernel and the function space. Then we make relevant comments on the implications of the assumptions and how it is related to practical scenarios.

{\bf \noindent Theorem 1.}
(Formal) Assume that the coefficients satisfy $-A(\mat{x})= V(\mat{x})$ and $A(\mat{x})\mat{\Sigma}(\mat{x})=\mat{\Sigma}_m(\mat{x})$, and that $\|\sqrt{A(\mat{x})}\|_{L^{\infty}}\|(-\frac{A'^{\otimes2}(\mat{x})}{8A(\mat{x})^{3/2}}+\frac{A''(\mat{x})}{4\sqrt{A(\mat{x})}})\|_{L^{\infty}}\leq c_1$ for some $c_1>0$. In the function space $\mathcal{H}_{c_2}$, the PCA problem of Eq.~(\ref{eq:eigen_pca}) can be approximated by the Schr\"{o}dinger problem of Eq.~(\ref{eq:eigen_se}), in the sense that $|\lambda_i-E_i|\leq O(c_1+c_2)$ for all $i$. Here the function space $\mathcal{H}_{c_2}$ is defined via $\mathcal{H}_{c_2}=\{\phi\in L^2\cap C^3, \|\phi'''\|_{L^{\infty}}\leq c_2 \|\phi\|_{L^{2}}\}$.

\begin{proof}
Since the $i$-th eigenvalue $\lambda_i$ of a positive operator $M$ is related to the min-max formulation of the Rayleigh quotient:
$$
\lambda_{i}=\min_{S: \operatorname{dim}(S)=i } \max _{x \in S,\|x\|=1}\left( M x, x\right),
$$
we only need to bound the differences between Eq.~(\ref{eq:ray_pca}) and (\ref{eq:ray_se}) for $\|\phi\|_{L^2}=1$. 
In the following we aim to build a correspondence between two expressions in $\{\cdots\}$ of Eq.~(\ref{eq:ray_pca}) and (\ref{eq:ray_se}). Firstly we shall introduce an approximation via Taylor's expansion. 

\begin{align}
    & \int_{\mat{y}}d\mat{y}C(\mat{x},\mat{y})\phi(\mat{y})\\
    & =  \int_{\mat{y}}d\mat{y}\sqrt{A(\mat{x})A(\mat{y})}{\rm exp}(-\frac{1}{2}(\mat{y}-\mat{x})^T\mat{\Sigma}^{-1}(\mat{y}-\mat{x}))\phi(\mat{y})\\
    & =  \int_{\mat{r}}d\mat{r}\sqrt{A(\mat{x})A(\mat{x}+\mat{r})}{\rm exp}(-\frac{1}{2}\mat{r}^T\mat{\Sigma}^{-1}\mat{r})\phi(\mat{x}+\mat{r})\\
    \label{eq:only_approx}
    & \approx \int_{\mat{r}}d\mat{r} {\rm exp}(-\frac{1}{2}\mat{r}^T\mat{\Sigma}^{-1}\mat{r}) [A(\mat{x})+\frac{1}{2}\nabla A(\mat{x})\cdot\mat{r})][\phi(\mat{x})+\nabla\phi(\mat{x})\cdot\mat{r}+\frac{1}{2}\mat{r}^T\nabla\nabla^T\phi(\mat{x})\mat{r}]\\
    & = \int_{\mat{r}} d\mat{r}{\rm exp}(-\frac{1}{2}\mat{r}^T\mat{\Sigma}^{-1}\mat{r})(\frac{1}{2}\mat{r}^T(\nabla A(\mat{x})\nabla^T\phi(\mat{x})+A(\mat{x})\nabla\nabla^T\phi(\mat{x}))\mat{r}+\phi(\mat{x})A(\mat{x}))\\
    & = \int_{\mat{r}} d\mat{r}{\rm exp}(-\frac{1}{2}\mat{r}^T\mat{\Sigma}^{-1}\mat{r})(\frac{1}{2}\mat{r}^T(\nabla^T\cdot(A(\mat{x})\nabla\phi\mat(x)))\mat{r}+\phi(\mat{x})A(\mat{x}))\\
    \label{eq:14}
    & = (2\pi)^{\frac{d}{2}}({\rm det}\mat{\Sigma})^\frac{1}{2} (\phi(x)A(\mat{x})+\frac{1}{2}{\rm Tr}(\mat{\Sigma}\nabla^T\cdot(A(\mat{x})\nabla\phi(\mat{x}))))
\end{align}
From Eq.~(A3) to (A4), $A(\mat{x})$ and $\phi(\mat{x})$ are Taylor expanded to first order and second order respectively. From Eq.~(A4) to (A5), odd terms of $\mat{r}$ vanish due to symmetry. From Eq.~(A6) to (A7), we leverage the Gaussian integral:
\begin{equation}
    \int_{\mat{r}}d\mat{r}{\rm exp}(-\frac{1}{2}\mat{r}^T\mat{\Sigma}^{-1}\mat{r})(\mat{r}^T\mat{B}\mat{r})=(2\pi)^{\frac{d}{2}}({\rm det}\mat{\Sigma})^{\frac{1}{2}}{\rm Tr}(\mat{\Sigma}\mat{B})
\end{equation}
We insert the trace term back to Eq.~(\ref{eq:ray_pca}) and invoke integration by parts:
\begin{align}
    &\int_{\mat{x}}d\mat{x}\phi(\mat{x})\cdot {\rm Tr}(\mat{\Sigma}\nabla^T\cdot(A(\mat{x})\nabla\phi(\mat{x})))\\
    & = {\rm Tr}(\mat{\Sigma}\int_{\mat{x}}d\mat{x}\phi(\mat{x})\nabla^T\cdot(A(\mat{x})\nabla\phi(\mat{x})))\\
    & = {\rm Tr}(\mat{\Sigma}\int_{\mat{x}}d\mat{x}[\nabla^T\cdot(\phi(\mat{x})A(\mat{x})\nabla\phi(\mat{x}))-\nabla\phi(\mat{x})(A(\mat{x})\nabla\phi(\mat{x})))^T])\\
    & = -{\rm Tr}(\mat{\Sigma} \int_{\mat{x}}d\mat{x} A(\mat{x})\nabla\phi(\mat{x})\nabla^T\phi(\mat{x}))\\
    \label{eq:19}
    & = -\int_{\mat{x}}d\mat{x} A(\mat{x})\nabla^T\phi(\mat{x})\mat{\Sigma}\nabla\phi(\mat{x})
\end{align}
By inserting Eq.~(\ref{eq:14}) and ~(\ref{eq:19}) back to Eq.~(\ref{eq:ray_pca}) we have
\begin{align}\label{eq:ray_final_pca}
    R_{pca}(C,\phi)=-(2\pi)^{\frac{d}{2}}({\rm det}\mat{\Sigma})^{\frac{1}{2}}\int_{\mat{x}}d\mat{x} (-A(\mat{x})\phi^2(\mat{x})+\nabla^T\phi(\mat{x})(\frac{1}{2}A(\mat{x})\mat{\Sigma})\nabla\phi(\mat{x}))
\end{align}
After integration by parts, Eq.~(\ref{eq:eigen_se}) now becomes
\begin{align}\label{eq:ray_final_se}
    R_{se}(\hat{H},\psi)=\int_{\mat{x}}d\mat{x} (V(\mat{x})\psi^2(\mat{x})+\nabla^T\psi(\mat{x})\frac{1}{2}\mat{\Sigma}_m\nabla\psi(\mat{x}))
\end{align}
To equate Eq.~(\ref{eq:ray_final_pca}) and (\ref{eq:ray_final_se}) (ignoring the constant factor in Eq.~(\ref{eq:ray_final_pca})), we only need
Eq.~(\ref{eq:equivalence}).

Therefore, we only need to give an estimate of the approximation  in order to obtain a qualitative bound on the error.

In fact, the Taylor's residue in Eq.~(\ref{eq:only_approx}) in approximation of $\sqrt{A(\mat{x})A(\mat{x}+\mat{r})}$ is $\sqrt{A(\mat{x})}(-\frac{A'^{\otimes2}(\mat{x}+\epsilon \mat{r})}{8A(\mat{x}+\epsilon \mat{r})^{3/2}}+\frac{A''(\mat{x}+\epsilon \mat{r})}{4\sqrt{A(\mat{x}+\epsilon\mat{r})}})\cdot \mat{r}^{\otimes 2}$. Similarly, the Taylor's residue in line Eq.~(\ref{eq:only_approx}) in approximation of $\phi(\mat{x}+ \mat{r})$ is $\frac{1}{6} \phi'''(\mat{x}+\epsilon \mat{r}) \cdot \mat{r}^{\otimes 3}$.
Thus we can bound the approximation error as:
\begin{align*}
    & \int_{\mat{r}} d\mat{r}{\rm exp}(-\frac{1}{2}\mat{r}^T\mat{\Sigma}^{-1}\mat{r})\{|c_1 \mat{r}^{\otimes 2}||[\phi(\mat{x})+\nabla\phi(\mat{x})\cdot\mat{r}+\frac{1}{2}\mat{r}^T\nabla\nabla^T\phi(\mat{x})\mat{r}]|\\&+|[A(\mat{x})+\frac{1}{2}\nabla A(\mat{x})\cdot\mat{r}]|\frac{1}{6} \|\phi'''\|_{L^{\infty}} | \mat{r}^{\otimes 3}|+|c_1 \mat{r}^{\otimes 2}|\frac{1}{6} \|\phi'''\|_{L^{\infty}}  |\mat{r}^{\otimes 3}|\}
\end{align*}
Evaluating the Gaussian integrals, we obtain the desired error bound of $O(c_1+c_2)$.

\end{proof}

To avoid confusion, we would like to provide a few remarks here:

(1) The only approximation made in the proof of Theorem \ref{thm:1} lies in Eq.~(\ref{eq:only_approx}) where $A(\mat{x})$ and $\phi(\mat{x})$ are taylor expanded to first-derivative and second-derivative, respectively. The underlying assumption is that (a) the correlation lengths (eigenvalues of $\mat{\Sigma}$) should be much smaller than the size of whole integral domain, so that the Gaussian factor decays fast enough before Taylor expansion fails; (b) the variation of $A(\mat{x})$ is milder than variation of $\phi(\mat{x})$, or $\nabla\nabla^TA(\mat{x})/A(\mat{x})\ll \nabla\nabla^T\phi(\mat{x})/\phi(\mat{x})$. 

(2) The assumptions on the coefficients and the function space might appear strange.  The purpose is to impose a control of higher-order derivatives, so as to make the approximation via Tayler’s expansion valid. In practical implementations, we shall relax the constraint and search for eigenfunctions in $L^2$. For the numerical experiments and its lower eigenmodes, the functions are with low oscillation with associated constants $c_1$, $c_2$ small, so the approximation is indeed accurate. There shall be more oscillations for high eigenmode functions, in the regime beyond the applicability of our theorem, which explains why our approximation is more valid and mostly applicable to the approximation of the first few eigenmodes.
Unfortunately, we don't know a priori how many lower eigenmodes are applicable for this theorem. The statement of the theorem can only serve as a posteriori criteria in the numerical experiments  since the desired a priori estimates on the first few eigenmodes are not available.

(3) Although we have assumed $C(\mat{x},\mat{y})$ as Gaussian, our proof also works for any function that can be decomposed into a linear combination of Gaussian radial basis function. The equivalence between PCA and SE remains valid when the covariance function $C(\mat{x},\mat{y})$ is of the form of an exponential function, which is reminiscent of many condensed matter systems in physics, e.g. Ising model. Please refer to Appendix B and C for details. 

(4) The corresponding relations Eq.~(\ref{eq:equivalence}) are particularly interesting in terms of physics. Normally one locates the system at disposal on a compact support $\Omega_c\in\Omega$, i.e. $A(\mat{x})>0$ for $x\in\Omega_s$ while $A(\mat{x})=0$ for $x\in\Omega/\Omega_s$, $V(\mat{x})=-A(\mat{x})$ implies that the quantum particle is trapped in a potential well on $\Omega_s$. Further, if $A(\mat{x})$ can be approximated as a quadratic form around the global maxima, the Schr\"{o}dinger equation is then readily identified as a well-studied system in physics: the quantum harmonic oscillator, the properties of which are summarized in Appendix E.

\section{Generalization of Theorem 1 to Exponential Correlations}\label{app:exponential_theory}

In the physics community, we are often concerned with covariance fields of the following type:
\begin{equation}
    C(x,y)=\sqrt{A(\mat{x})A(\mat{y})}{\rm exp}(-|\mat{\Sigma}^{-\frac{1}{2}}(\mat{y}-\mat{x})|)/Z_\mat{\Sigma},
\end{equation}
where $\mat{\Sigma}$ is a positive definite function, $c_{d-1}$ is the area surface of the unit sphere in $R^d$, $Z_\mat{\Sigma}:=(d-1)! c_{d-1}({\rm det}\mat{\Sigma})^{\frac{1}{2}} $ is the normalization factor of the covariance function, and $|\cdot|$ refers to vector 2-norm.

Similar to the Gaussian case, we shall use Sch\"{o}dinger equation to perform PCA approximations. Here we identify when the PCA problem described by Eq.~(\ref{eq:eigen_pca}) can be well approximated by the Schr\"{o}dinger problem described by Eq.~(\ref{eq:eigen_se}) as in Theorem 1.

\begin{theorem}\label{thm:2}(Informal)
The PCA problem described by Eq.~(\ref{eq:eigen_pca}) can be approximated by the Schr\"{o}dinger problem described by Eq.~(\ref{eq:eigen_se}). i.e. two systems have the same equation hence same eigenvalues and eigenmodes, upto a second-order approximation provided that these quantities are equal:
\begin{eqnarray}\label{eq:equivalence1}
\begin{aligned}
    \phi(\mat{x})&\Longleftrightarrow\psi(\mat{x})\\ -A(\mat{x})&\Longleftrightarrow V(\mat{x})\\ (d+1)A(\mat{x})\mat{\Sigma}&\Longleftrightarrow\mat{\Sigma}_m
\end{aligned}
\end{eqnarray}

\end{theorem}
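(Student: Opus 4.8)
The plan is to mirror the proof of Theorem \ref{thm:1} almost verbatim, replacing the Gaussian kernel $\exp(-\tfrac12 \mat{r}^T\mat{\Sigma}^{-1}\mat{r})$ by the exponential kernel $\exp(-|\mat{\Sigma}^{-1/2}\mat{r}|)/Z_{\mat{\Sigma}}$ and recomputing the two moment integrals that appear. As before, I would characterize the PCA operator and the Hamiltonian by their Rayleigh quotients and use the min-max variational formulation of eigenvalues, so that it suffices to compare $\int_{\mat{y}} d\mat{y}\, C(\mat{x},\mat{y})\phi(\mat{y})$ with $(-\tfrac{\mat{\Sigma}_m}{2}\Delta + V(\mat{x}))\phi(\mat{x})$ pointwise after a Taylor expansion of $\sqrt{A(\mat{x})A(\mat{x}+\mat{r})}$ to first order and of $\phi(\mat{x}+\mat{r})$ to second order. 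The odd-order terms in $\mat{r}$ again vanish by the spherical symmetry of the kernel (it depends only on $|\mat{\Sigma}^{-1/2}\mat{r}|$), leaving a zeroth-order term proportional to $\phi(\mat{x})A(\mat{x})$ and a second-order term proportional to $\mat{r}^T\big(\nabla^T\!\cdot(A(\mat{x})\nabla\phi(\mat{x}))\big)\mat{r}$, exactly as in Eqs.~(A5)--(A6).

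The one genuinely new computation is the pair of isotropic moments of the exponential weight. Substituting $\mat{u}=\mat{\Sigma}^{-1/2}\mat{r}$ (Jacobian $(\det\mat{\Sigma})^{1/2}$) reduces everything to radial integrals $\int_{\mathbb{R}^d} d\mat{u}\, e^{-|\mat{u}|}$ and $\int_{\mathbb{R}^d} d\mat{u}\, e^{-|\mat{u}|} \mat{u}\mat{u}^T$. Using $\int_0^\infty \rho^{d-1}e^{-\rho}d\rho = (d-1)!$ and $\int_0^\infty \rho^{d+1}e^{-\rho}d\rho = (d+1)!$, together with $\int_{S^{d-1}} \hat{u}\hat{u}^T = \tfrac{c_{d-1}}{d}\mat{I}$, one gets $\int e^{-|\mat{u}|}d\mat{u} = c_{d-1}(d-1)!$, which is precisely the chosen normalization $Z_{\mat{\Sigma}}/(\det\mat{\Sigma})^{1/2}$, and $\int e^{-|\mat{u}|}\mat{u}\mat{u}^T d\mat{u} = \tfrac{c_{d-1}(d+1)!}{d}\mat{I} = (d+1)\cdot c_{d-1}(d-1)!\cdot\mat{I} = (d+1)Z_{\mat{\Sigma}}(\det\mat{\Sigma})^{-1/2}\mat{I}$. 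Transforming back, $\int e^{-|\mat{\Sigma}^{-1/2}\mat{r}|}\mat{r}\mat{r}^T d\mat{r}/Z_{\mat{\Sigma}} = (d+1)\mat{\Sigma}$, so the analogue of Eq.~(A7) reads $\int d\mat{y}\, C(\mat{x},\mat{y})\phi(\mat{y}) \approx \phi(\mat{x})A(\mat{x}) + \tfrac{d+1}{2}\,\mathrm{Tr}\!\big(\mat{\Sigma}\,\nabla^T\!\cdot(A(\mat{x})\nabla\phi(\mat{x}))\big)$. The extra factor $d+1$ is exactly what promotes $A(\mat{x})\mat{\Sigma}$ to $(d+1)A(\mat{x})\mat{\Sigma}$ in the correspondence \eqref{eq:equivalence1}; the zeroth-order term is unchanged, so $-A(\mat{x})\Leftrightarrow V(\mat{x})$ still holds.

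The remaining steps are identical to Appendix \ref{app:gaussian_theory}: insert the trace term into $R_{pca}$, integrate by parts to convert $\mathrm{Tr}(\mat{\Sigma}\,\nabla^T\!\cdot(A\nabla\phi))$ into a Dirichlet-type form $-\int A(\mat{x})\nabla^T\phi\,\mat{\Sigma}\,\nabla\phi$, and match it against the integrated-by-parts $R_{se}$ with $\mat{\Sigma}_m = (d+1)A(\mat{x})\mat{\Sigma}$. Finally, the error analysis carries over with only constants changed: the Taylor residues of $\sqrt{A(\mat{x})A(\mat{x}+\mat{r})}$ (order $|\mat{r}|^2$) and of $\phi(\mat{x}+\mat{r})$ (order $|\mat{r}|^3$) are weighted against $e^{-|\mat{\Sigma}^{-1/2}\mat{r}|}$, whose moments up to order $|\mat{r}|^5$ are all finite (indeed all moments of the exponential weight are finite), yielding the same $O(c_1+c_2)$-type bound over a suitable function space $\mathcal{H}_{c_2}$.

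I expect the main obstacle to be purely bookkeeping rather than conceptual: getting the normalization constant $Z_{\mat{\Sigma}} = (d-1)!\,c_{d-1}(\det\mat{\Sigma})^{1/2}$ to cancel consistently between the zeroth and second moments, and tracking the anisotropic change of variables $\mat{r}\mapsto\mat{\Sigma}^{-1/2}\mat{r}$ so that the second moment comes out as a clean multiple of $\mat{\Sigma}$ (rather than, say, $\mat{\Sigma}^{1/2}\mat{I}\mat{\Sigma}^{1/2}=\mat{\Sigma}$, which happens to work out only because the intermediate isotropic moment is a scalar multiple of the identity). A secondary subtlety is that the exponential kernel is only $C^0$ at the origin, so one should be slightly careful that the Taylor expansion is of the smooth factors $A$ and $\phi$ — not of the kernel — and that the weighted moment integrals are understood in the Lebesgue sense; this causes no real difficulty since the weight and all needed polynomial moments are integrable.
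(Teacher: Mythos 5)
Your proposal is correct and follows essentially the same route as the paper's own proof: the same Taylor expansion of $\sqrt{A(\mat{x})A(\mat{x}+\mat{r})}$ and $\phi(\mat{x}+\mat{r})$, the same vanishing of odd moments by symmetry, the same integration by parts, with the Gaussian second-moment identity replaced by the exponential-kernel identity $\int e^{-|\mat{\Sigma}^{-1/2}\mat{r}|}\mat{r}\mat{r}^T d\mat{r}/Z_{\mat{\Sigma}} = (d+1)\mat{\Sigma}$. You make the radial-plus-angular factorization that yields the factor $(d+1)$ explicit (the paper just states the integral identity), and your closing remark about the kernel being only $C^0$ at the origin is a nice clarification, but the core argument is the same.
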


\begin{proof}
In the following we aim to build a correspondence between two expressions in $\{\cdots\}$ of Eq.~(\ref{eq:ray_pca}) and (\ref{eq:ray_se}).

\begin{align}
    & \int_{\mat{y}}d\mat{y}C(\mat{x},\mat{y})\phi(\mat{y})\\
    & =  \int_{\mat{y}}d\mat{y}\sqrt{A(\mat{x})A(\mat{y})}{\rm exp}(-|\mat{\Sigma}^{-\frac{1}{2}}(\mat{y}-\mat{x})|)\phi(\mat{y})/Z_\mat{\Sigma}\\
    & =  \int_{\mat{r}}d\mat{r}\sqrt{A(\mat{x})A(\mat{x}+\mat{r})}{\rm exp}(-|\mat{\Sigma}^{-\frac{1}{2}}(\mat{r})|)\phi(\mat{x}+\mat{r})/Z_\mat{\Sigma}\\
    \label{eq:only_approx1}
    & \approx \int_{\mat{r}}d\mat{r} {\rm exp}(-|\mat{\Sigma}^{-\frac{1}{2}}(\mat{r})|) [A(\mat{x})+\frac{1}{2}\nabla A(\mat{x})\cdot\mat{r})][\phi(\mat{x})+\nabla\phi(\mat{x})\cdot\mat{r}+\frac{1}{2}\mat{r}^T\nabla\nabla^T\phi(\mat{x})\mat{r}]/Z_\mat{\Sigma}\\
    & = \int_{\mat{r}} d\mat{r}{\rm exp}(-|\mat{\Sigma}^{-\frac{1}{2}}(\mat{r})|)(\frac{1}{2}\mat{r}^T(\nabla A(\mat{x})\nabla^T\phi(\mat{x})+A(\mat{x})\nabla\nabla^T\phi(\mat{x}))\mat{r}+\phi(\mat{x})A(\mat{x}))/Z_\mat{\Sigma}\\
    & = \int_{\mat{r}} d\mat{r}{\rm exp}(-|\mat{\Sigma}^{-\frac{1}{2}}(\mat{r})|)(\frac{1}{2}\mat{r}^T(\nabla^T\cdot(A(\mat{x})\nabla\phi\mat(x)))\mat{r}+\phi(\mat{x})A(\mat{x}))/Z_\mat{\Sigma}\\
    \label{eq:141}
    & = \phi(x)A(\mat{x})+\frac{1}{2}(d+1){\rm Tr}(\mat{\Sigma}^{\frac{1}{2}}\nabla^T\cdot(A(\mat{x})\nabla\phi(\mat{x}))\mat{\Sigma}^{\frac{1}{2}})
\end{align}
From Eq.~(B5) to (B6), $A(\mat{x})$ and $\phi(\mat{x})$ are Taylor expanded to first order and second order respectively. From Eq.~(B6) to (B7), odd terms of $\mat{r}$ vanish due to symmetry. From Eq.~(B8) to (B9), we leverage the Exponential integral:
\begin{equation}
    \int_{\mat{r}}d\mat{r}{\rm exp}(-|\mat{\Sigma}^{-\frac{1}{2}}(\mat{r})|)(\mat{r}^T\mat{B}\mat{r})/Z_\mat{\Sigma}=(d+1){\rm Tr}(\mat{\Sigma}^{\frac{1}{2}}\mat{B}\mat{\Sigma}^{\frac{1}{2}})
\end{equation}
We insert the trace term back to Eq.~(\ref{eq:ray_pca}) and invoke integration by parts:
\begin{align}
    &\int_{\mat{x}}d\mat{x}\phi(\mat{x})\cdot {\rm Tr}(\mat{\Sigma}^{\frac{1}{2}}\nabla^T\cdot(A(\mat{x})\nabla\phi(\mat{x}))\mat{\Sigma}^{\frac{1}{2}})\\
    & = {\rm Tr}(\mat{\Sigma}^{\frac{1}{2}}\int_{\mat{x}}d\mat{x}\phi(\mat{x})\nabla^T\cdot(A(\mat{x})\nabla\phi(\mat{x}))\mat{\Sigma}^{\frac{1}{2}})\\
    & = {\rm Tr}(\mat{\Sigma}^{\frac{1}{2}}\int_{\mat{x}}d\mat{x}[\nabla^T\cdot(\phi(\mat{x})A(\mat{x})\nabla\phi(\mat{x}))-\nabla\phi(\mat{x})(A(\mat{x})\nabla\phi(\mat{x})))^T]\mat{\Sigma}^{\frac{1}{2}})\\
    & = -{\rm Tr}( \mat{\Sigma}^{\frac{1}{2}}\int_{\mat{x}}d\mat{x} A(\mat{x})\nabla\phi(\mat{x})\nabla^T\phi(\mat{x})\mat{\Sigma}^{\frac{1}{2}})\\
    \label{eq:191}
    & = -\int_{\mat{x}}d\mat{x} A(\mat{x})\nabla^T\phi(\mat{x})\mat{\Sigma}\nabla\phi(\mat{x})
\end{align}
By inserting Eq.~(\ref{eq:141}) and ~(\ref{eq:191}) back to Eq.~(\ref{eq:ray_pca}) we have
\begin{align}\label{eq:ray_final_pca1}
    R_{pca}(C,\phi)=-\int_{\mat{x}}d\mat{x} (-A(\mat{x})\phi^2(\mat{x})+\frac{1}{2}(d+1)\nabla^T\phi(\mat{x})(A(\mat{x})\mat{\Sigma}\nabla\phi(\mat{x}))
\end{align}
After integration by parts, Eq.~(\ref{eq:eigen_se}) now becomes
\begin{align}\label{eq:ray_final_se1}
    R_{se}(\hat{H},\psi)=\int_{\mat{x}}d\mat{x} (V(\mat{x})\psi^2(\mat{x})+\frac{1}{2}\nabla^T\psi(\mat{x})\mat{\Sigma}_m\nabla\psi(\mat{x}))
\end{align}
To equate Eq.~(\ref{eq:ray_final_pca1}) and (\ref{eq:ray_final_se1}), we only need
Eq.~(\ref{eq:equivalence1}).
\end{proof}

\section{Experimental results for Laplace kernel}\label{app:exponential_exp}

\begin{figure}[htbp]
    \centering
    \includegraphics[width=0.4\linewidth]{./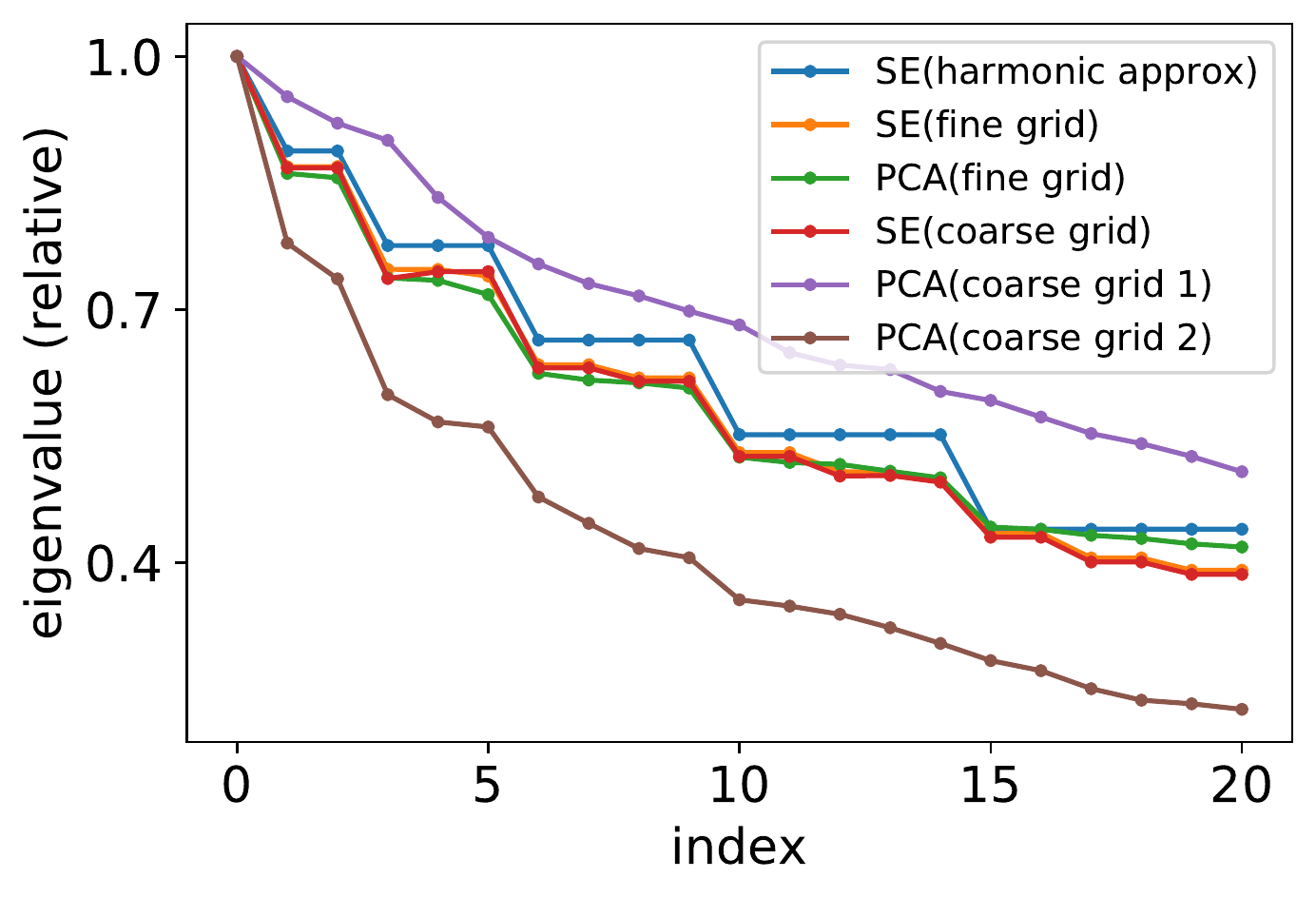}
    \caption{Exponential correlation example: the (relative) eigenvalues of first 21 modes in 5 different cases. Note: (1) On the fine grid, Schr\"{o}dinger PCA and PCA obtain similar eigenvalues; (2) On the coarse grid, Schr\"{o}dinger PCA still works while PCA fails; (3) Harmonic approximation is nearly exact for first 15 eigenvalues;}
    \label{fig:eigenvalues_lap}
\end{figure}

\begin{figure}[t]
    \centering
    \includegraphics[width=0.7\linewidth]{./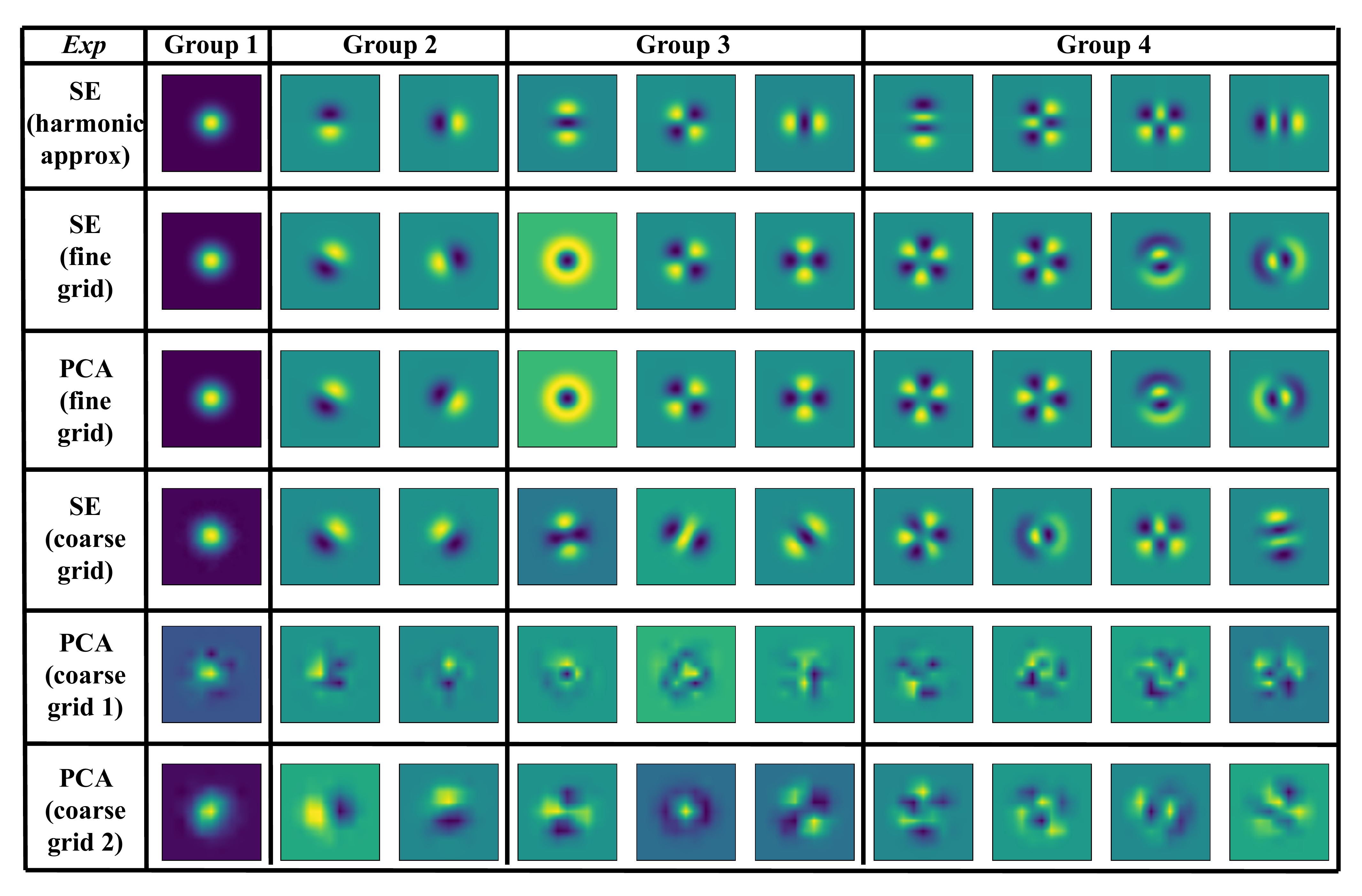}
    \caption{The exponential correlation example: eigenmodes of first 10 modes in 5 different cases. (1) On the fine grid, Schr\"{o}dinger PCA and PCA obtain similar eigenmodes; (2) Harmonic approximation is nearly exact for first 10 eigenmodes; (3) On the coarse grid, Schr\"{o}dinger PCA still works while PCA fails.}
    \label{fig:modes_lap}
\end{figure}

\section{Generation of Gaussian process}\label{app:generate_gp}
In this section, we will discuss the generation of the rectangular Gaussian process (GP) in our numerical experiments. We simply set $\mat{\Sigma}=\sigma^2\mat{I}$  in Eq. (\ref{eq:grf}).\par
The generation process contains two stages: (1) we add Gaussian filtering to match the correlation kernel, i.e. ${\rm exp}(-\frac{(\mat{y}-\mat{x})^T(\mat{y}-\mat{x})}{2\sigma^2})$ in Eq. (\ref{eq:grf}). (2) We re-scale the field dependent on $\mat{x}$ to match the variance $A(\mat{x})$ and magnitude of covariance $\sqrt{A(\mat{x})A(\mat{y})}$ in Eq. (\ref{eq:grf}).

\subsection{Covariance Generation}
Since the two-dimensional GP lies on grid$[-50,50]\times[-50,50]$(size $101\times101$), we start from generating a $101\times101$ matrix with each component independently sampled from a standard normal distribution.

As we mentioned, an isotropic correlation length $\sigma$ is introduced in our experiment. The width of the Gaussian filter, denoted as $\sigma_0$, is set to match the correlation length $\sigma$. It is easy to see that $\sigma=\sqrt{2}\sigma_0$. \par
For the standard Gaussian filtering algorithm, a truncation along each direction is introduced for efficient calculation. In our numerical experiment, we set this truncation level at $5\sigma$\footnote{This truncation results in a $\mathcal{O}(10^{-7})$ relative error level.}.
\begin{equation}
    C(\mat{x},\mat{y})=C_{ideal}(\mat{x},\mat{y})\eta(5\sigma-|x_1-y_1|)\eta(5\sigma-|x_2-y_2|)
    \label{eq:trunc}
\end{equation}
Here the $\eta(\cdot)$ is the step function. 
\begin{equation}
    \eta(x) = \begin{cases}
    0,x<0;\\
    1,x>0;\\
    \frac{1}{2},x=0.
    \end{cases}
    \label{eq:step}
\end{equation}
With this truncation, the Gaussian filtering can be achieved by convolution with a Gaussian kernel $G_\sigma$.

\subsection{Modifying magnitudes}
Then, due to the scaling property of gaussian distribution, we design a window function $A'(\mat{x})$ to modulate the original GP. Field values at point $\mat{x}$ are multiplied by $\sqrt{A(\mat{x})}$.

\subsection{The overall algorithm}
Based on the above discussion, the overall algorithm to generate $n$ number of samples GP with pre-defined variance $A(\mat{x})$ and the gaussian-like covariance $C(\mat{x},\mat{y})$ with pre-defined isotropic correlation length $\sigma$ is as Algorithm \ref{alg:genGRF}.
\begin{algorithm}[]
\SetAlgoLined
\KwResult{sample size $n$, grid infomation $[-50,50]\times[-50,50]\sim 101\times101$ \;variance definition $A(\mat{x})$, isotropic correlation length $\sigma$}
Determine $A'(\mat{x})$ with given $\sigma$ and $A(\mat{x})$\;
 \For{$i=1,2,\cdots,n$}{
    (1) Get the initial GP sample $\phi^0_i(\mat{x})$,$\phi^0_i(\mat{x}_{j,k})\sim \text{i.i.d.} N(0,1^2)$ for $j,k \in [-50,50]$, here $N(0,1^2)$ stands for the standard normal distribution\;
    (2) Add correlations via Gaussian filtering at $5\sigma$ truncation level. $\phi_i(\mat{x})=\phi_i^0(\mat{x})\otimes G_{\sigma}$.  \;
    (3) Obtain $\psi_i(\mat{x})$ via modulating $\phi_i(\mat{x})$ through $\psi_i(\mat{x}_{j,k})=A'(\mat{x}_{j,k})\cdot\phi_i(\mat{x}_{j,k})$.
    }
    \KwResults{GP samples $\{\psi_i(\mat{x})\}_{i=1}^{n}$}
 \caption{Generation of Gaussian Process}
 \label{alg:genGRF}
\end{algorithm}

\section{Properties of quantum harmonic oscillator}\label{app:harmonic}

The standard equation (in physics) of a 1-dimensional harmonic oscillator is written as\footnote{We have set the Planck consant $\hbar=1$.}:
\begin{equation}
    -\frac{1}{2m}\nabla^2\psi(x)+\frac{1}{2}m\omega^2x^2\psi(x)=E\psi(x)
\end{equation}
whose eigenstates $\psi_n(\mat{x})$ and eigenenergies $E_n$ are:
\begin{equation}
    \psi_n(x)=\frac{1}{\sqrt{2^nn!}}(\frac{m\omega}{\pi})^\frac{1}{4}e^{-\frac{m\omega x^2}{2}}H_n(\sqrt{m\omega}x), E_n=(n+\frac{1}{2})\omega\quad (n=0,1,2,\cdots)
\end{equation}
where $H_n(z)=(-1)^ne^{z^2}\frac{d^n}{dz^n}(e^{-z^2})$ are Hermite polynomials. For the 2-dimensional harmonic oscillator in our case, it can be decoupled into two independent oscillators along $x_1$ and $x_2$ respectively. The total energy is equal to the sum of energy in both directions $E=E_1+E_2$, and the total wavefunction is equal to the product of wavefunction in both directions $\psi(x_1,x_2)=\psi_1(x_1)\psi_2(x_2)$.

\section{Generation of Icosahedron mesh}\label{app:mesh}

There are many ways to build meshes on a sphere: UV sphere, normalized cube, spherified cube and icosahedron. UV sphere has singularities around two poles, and normalized cube and spherified cube do not sustain enough rational symmetry. Given the above considerations, we choose icosahedron mesh to discretize the sphere.

Our mesh starts from an icosahedron (20 faces, 30 edges, and 12 vertices). Each face of icosahedron is an equidistant triangle. \textit{Subdivision} is to partition one triangle into four smaller equidistant triangles, as shown in Figure \ref{fig:icosahedron}. Then middle points are re-scaled to project to the surface of the sphere. A larger number of subdivisions generates finer meshes, shown in Figure \ref{fig:icolevel}. Both Figure \ref{fig:icosahedron} and \ref{fig:icolevel} courtesy to \url{http://blog.andreaskahler.com/2009/06/creating-icosphere-mesh-in-code.html}.

\begin{figure}[htbp]
    \centering
    \includegraphics[width=0.6\linewidth]{./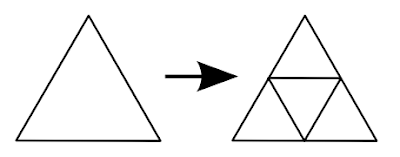}
    \caption{Divide one equidistant triangle to four smaller equidistant triangles.}
    \label{fig:icosahedron}
\end{figure}

\begin{figure}[htbp]
    \centering
    \includegraphics[width=0.6\linewidth]{./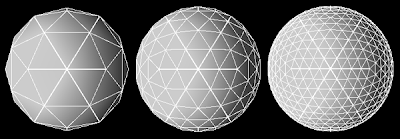}
    \caption{More subdivisions generate finer meshes.}
    \label{fig:icolevel}
\end{figure}

After the icosahedron mesh is generated, corresponding degree matrix $\mat{D}$, adjacency matrix $\mat{A}$ and Laplacian matrix $\mat{L}=\mat{D}-\mat{A}$ can be computed. In the global climate example (section \ref{sec:exp_climate}), we replace the Laplacian operator $\nabla^2$ with the Laplacian matrix $\mat{L}$ in the Schr\"{o}dinger equation, i.e.
\begin{equation}
    \nabla^2\to -\mat{L}d^2
\end{equation}
where $d$ is the averaged length of edges.

\end{document}